\theoremstyle{definition}
\newtheorem{theorem}{Theorem}
\newtheorem{lemma}{Lemma}
\newtheorem{remark}{Remark}
\newtheorem{definition}{Definition}
\newtheorem{example}{Example}
\newtheorem{proposition}{Proposition}
\newcommand{\Ex}{\mathbb{E}}
\newcommand{\error}{\mathrm{P}_\mathrm{e}}
\newcommand{\SW}{\mathsf{SW}}
\newcommand{\cC}{\mathcal{C}}
\newcommand{\cE}{\mathcal{E}}
\newcommand{\cG}{\mathcal{G}}
\newcommand{\cI}{\mathcal{I}}
\newcommand{\cM}{\mathcal{M}}
\newcommand{\cP}{\mathcal{P}}
\newcommand{\cQ}{\mathcal{Q}}
\newcommand{\cS}{\mathcal{S}}
\newcommand{\cV}{\mathcal{V}}
\newcommand{\cX}{\mathcal{X}}
\newcommand{\cY}{\mathcal{Y}}
\newcommand{\cZ}{\mathcal{Z}}
\newcommand{\bmf}{\bm{f}}
\newcommand{\bms}{\bm{s}}
\newcommand{\bX}{\bm{X}}
\newcommand{\bY}{\bm{Y}}
\newcommand{\bx}{\bm{x}}
\newcommand{\by}{\bm{y}}
\newcommand{\hx}{\hat{x}}
\newcommand{\hbx}{\hat{\bm{x}}}
\newcommand{\bcX}{\overline{\cX}}
\newcommand{\markov}{ - \!\!\circ\!\! - }
\title{Distributed Computing for Functions with Certain Structures\thanks{This paper was presented in part at 2016 IEEE Information Theory Workshop at Cambridge, UK.}}
\author{Shigeaki Kuzuoka~\IEEEmembership{Member,~IEEE} and Shun Watanabe,~\IEEEmembership{Member,~IEEE}
\thanks{The work of S.~Kuzuoka is supported in part by JSPS KAKENHI Grant Number 26820145. The work of S.~Watanabe is supported
in part by JSPS KAKENHI Grant Number 16H06091.}
\thanks{S.~Kuzuoka is with the Faculty of Systems Engineering, Wakayama University, 930 Sakaedani, Wakayama, 640-8510 Japan, e-mail:kuzuoka@ieee.org.}%
\thanks{S.~Watanabe is with the Department of Computer and Information Sciences, Tokyo University of Agriculture and Technology, 2-24-16, Higashikoganeishi, Tokyo, 184-8588 Japan, e-mail:shunwata@cc.tuat.ac.jp.}%
}
\begin{document}
\flushbottom
\maketitle

\begin{abstract} 
The problem of distributed function computation is studied, where functions to be computed
is not necessarily symbol-wise. 
A new method to derive a converse bound for distributed computing is proposed; 
from the structure of functions to be computed,
information that is inevitably conveyed to the decoder is identified, and the bound is derived in terms of
the optimal rate needed to send that information. The class of informative functions is introduced, and,
for the class of smooth sources, the optimal rate for computing those functions is characterized. 
Furthermore, for i.i.d. sources with joint distribution that may not be full support,
functions that are  composition of symbol-wise function and the type of a sequence
are considered, and the optimal rate for computing those functions is characterized in terms
of the hypergraph entropy.
As a byproduct, our method also provides a conceptually simple
proof of the known fact that computing a Boolean function may require
as large rate as reproducing the entire source.  
\end{abstract}

\begin{IEEEkeywords}
distributed computing,
hypergraph entropy,
Slepian-Wolf coding
\end{IEEEkeywords}



\section{Introduction}

We study the problem of distributed computation, where the encoder observes $X^n$,
the decoder observes $Y^n$, and the function $f_n(X^n,Y^n)$ is to be computed
at the decoder based on the message sent from the encoder; see Fig.~\ref{Fig:1}. 
A straightforward scheme to compute a function is to use the Slepian-Wolf coding \cite{SlepianWolf73}.
However, since the decoder does not have to reproduce $X^n$ itself, the Slepian-Wolf rate can be improved in general. 
Then, our interest is how much improvement we can attain. 

The literature of distributed computation can be roughly categorized into two directions:\footnote{Here, we only review
papers that are directly related to this work. The problem of distributed function computation (with interactive communication)
has been actively studied in the computer science community as well \cite{Yao79,KusNis97}.}
symbol-wise functions and sensitive functions. For symbol-wise functions and the class of i.i.d. sources 
with positivity condition, i.e., i.i.d. sources for which all pairs of source symbols have positive probability, Han and Kobayashi derived the condition on functions 
such that the Slepian-Wolf rate cannot be improved at all \cite{HanKobayashi87}. 
In \cite{OrlitskyRoche01}, for i.i.d. sources that are not necessarily positive,
Orlitsky and Roche characterized the optimal rate for computing symbol-wise functions
in terms of the graph entropy introduced by K\"orner \cite{Korner73}. Particularly for i.i.d. sources with positivity condition, their result 
gives a simple characterization of the improvement of the optimal rate over the Slepian-Wolf rate. 

On the other hand, Ahlswede and Csisz\'ar introduced the class of sensitive functions,
which are not necessarily symbol-wise; they showed that, for computation of sensitive functions, the Slepian-Wolf rate cannot be
improved at all for the class of i.i.d. sources with positivity condition \cite{AhlswedeCsiszar81}.
A remarkable feature of sensitive functions is that, even though their image sizes are negligibly small compared to the input sizes,
as large rate as reproducing the entire source is needed. 
Later, a simple proof of their result was given by El Gamal \cite{ElGamal83}, and their result was 
extended by the authors to the class of smooth sources \cite{DistribComp}; the class of smooth sources includes 
sources with memory, such as Markov sources with positive transition matrices, and non-ergodic sources, such as 
mixtures of i.i.d. sources with positivity condition, which enables us to study distributed computation for a variety of 
sources in a unified manner. 

\begin{figure}[tb]
\centering{
\includegraphics[width=0.4\textwidth]{./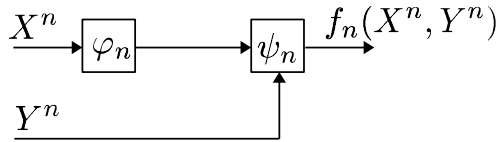}
\caption{Distributed computing with full-side-information}
\label{Fig:1}
}
\end{figure}

As described above, distributed computation for symbol-wise functions is quite well-understood; other than symbol-wise functions, 
our understanding of distributed computation is limited to the extreme case, i.e., the class of sensitive functions. Our motivation of this paper is to
further understand distributed computing for functions that are not symbol-wise nor sensitive.

\subsection{Contributions}

Our main technical contribution of this paper is a new method to derive a converse bound on distributed computation. 
A high level idea of our method is as follows: from the nature of distributed computing and the structure of the function to be computed, 
we identify information that is inevitably conveyed to the decoder. Then, we derive a bound in terms of the optimal rate needed to send that information.
As a by product, our method provides a conceptually simple proof of the above mentioned result \cite{AhlswedeCsiszar81, ElGamal83, DistribComp}
for a subclass of sensitive functions.\footnote{For instance, our method applies for the joint type, the Hamming distance, or
the inner product; see Examples \ref{example:joint_type} and \ref{example:inner_product}}.

As a class of functions such that our converse method is effective, we introduce 
the class of {\em informative functions}. 
For instance, the class includes compositions of functions where inner functions are symbol-wise
and outer functions are the type of a sequence or the modulo sum.\footnote{In a preliminary version of this paper 
published in ITW2016, these functions are investigated separately; the class of informative functions 
unifies the class of functions considered in the preliminary version.}
For the class of smooth sources,
we characterize the optimal rate for computing those functions
in terms of the Slepian-Wolf rate of an equivalence class of sources induced by the function to be computed (Theorem \ref{thm:full_support}).

Furthermore, as another application of our converse method, 
for the class of i.i.d. sources that are not necessarily positive, we characterize the optimal 
rate for computing composition of functions where inner functions are symbol-wise
and outer functions are the type of a sequence (Theorem \ref{theorem:non-full-support}). 
Even for the case of computing the joint type, which is the composition of the identity function and the type function, 
our result is novel since it is not covered by the result by Ahlswede-Csisz\'ar \cite{AhlswedeCsiszar81}.\footnote{In \cite{AhlswedeCsiszar81},
they considered a condition that is slightly weaker than the positivity condition (cf.~\cite[Theorem 2]{AhlswedeCsiszar81}); our result applies for sources that 
do not satisfy their weaker condition.} The optimal rate is characterized in terms of the hypergraph entropy, which is a natural extension of the graph entropy
of K\"orner (cf.~\cite{KornerMarton88}). In other words, our result gives an operational interpretation
to the hypergraph entropy.

Perhaps, the utility of our result can be best understood by comparing it with
the result of Orlitsky and Roche \cite{OrlitskyRoche01} via their example (cf.~Example \ref{example:comparison-main-theorem}): 
Consider an $n$-round online game, where in each round Alice and Bob each select one card without replacement from a 
virtual hat with three cards labeled 0, 1, 2. The one with larger number wins.
If Bob would like to know who won in each round, it suffices for Alice to send a message at rate $\frac{2}{3} h\left(\frac{1}{4}\right)$,
which is optimal \cite{OrlitskyRoche01}. Now, suppose that Bob does not care who won in each round; instead, he is only interested in
the total number of rounds he won. Then, our result says that it suffices for Alice to send a message at rate $\frac{1}{2}$, which is optimal.

\subsection{Organization of Paper}

The rest of the paper is organized as follows. 
In Sec.~\ref{sec:problem}, we introduce the problem formulation.
In Sec.~\ref{sec:idea}, we illustrate our converse method by using a simple example of
the inner product function. We also motivate the definition of the class of informative functions there.
In Sec.~\ref{sec:full_support}, we formally introduce the class of informative functions, and, for the class of smooth sources,
we characterize the optimal rate for computing those functions.
In Sec.~\ref{sec:non_full_support}, for the class of i.i.d. sources that are not necessarily positive,
we characterize the optimal rate for computing compositions of symbol-wise functions and the type function. 
In Sec.~\ref{sec:counclusion}, we close the paper with some conclusion and discussion. 
Some technical results and proofs of lemmas are given in  appendices.

\subsection{Notation}
Throughout this paper, random variables (e.g., $X$) and their
realizations (e.g., $x$) are denoted by capital and lower case letters
respectively.  All random variables take values in some finite alphabets
which are denoted by the respective calligraphic letters (e.g., $\cX$).
The probability distribution of random variable $X$ is denoted by $P_X$.
The support set of the distribution $P_X$ is denoted by $\mathsf{supp}(P_{X})$.
Similarly, $X^n:=(X_1,X_2,\dots,X_n)$ and
$x^n :=(x_1,x_2,\dots,x_n)$ denote, respectively, a random vector and
its realization in the $n$th Cartesian product $\cX^n$ of $\cX$.  We
will use bold lower letters to represent vectors if the length $n$ is
apparent from the context; e.g., we use $\bx$ instead of $x^n$.

For a finite set $\cS$, the cardinality of $\cS$ is denoted by $\lvert\cS\rvert$.
Given a sequence $\bms$ in the $n$th Cartesian product $\cS^n$ of $\cS$, the type $P_{\bms}=(P_{\bms}(s):s\in\cS)$ of $\bms$ is defined by
\begin{align}
 P_{\bms}(s):=\frac{\lvert\{i\in[1:n]:s_i=s\}\rvert}{n},\quad s\in\cS
\end{align}
where $[1:n]:=\{1,2,\dots,n\}$.
The set of all types of sequences in $\cS^n$ is denoted by $\cP_n(\cS)$.
The indicator function is denoted by $\bm{1}[\cdot]$.
Information-theoretic quantities are denoted in the usual manner
\cite{Cover2,CsiszarKorner2nd}.  For example, $H(X|Y)$ denotes the
conditional entropy of $X$ given $Y$.  The binary entropy is denoted by $h(\cdot)$. All logarithms are with respect
to base 2.

\section{Problem Formulation}\label{sec:problem}

Let $(\bm{X},\bm{Y}) = \{(X^n,Y^n) \}_{n=1}^\infty$ be a general correlated source with finite alphabet 
$\cX$ and $\cY$; the source is general in the sense of \cite{HanVerdu93}, i.e., it may have memory and may not be stationary nor ergodic. 
Later, we will specify a class of sources we consider in each section. 
Without loss of generality, we assume $\cX = \{0,1,\ldots,|\cX|-1\}$ and $\cY = \{0,1,\ldots,|\cY|-1\}$. 
We consider a sequence 
$\bm{f} = \{f_n\}_{n=1}^\infty$ of functions $f_n: \cX^n \times \cY^n \to \cZ_n$. A code $\Phi_n = (\varphi_n,\psi_n)$ 
for computing $f_n$ is defined by an encoder $\varphi_n:\cX^n \to \cM_n$ and a decoder $\psi_n:\cM_n \times \cY^n \to \cZ_n$.
The error probability of the code $\Phi_n$ is given by
\begin{align*}
\error(\Phi_n|f_n) := \Pr\left(  \psi_n(\varphi_n(X^n),Y^n) \neq f_n(X^n,Y^n) \right).
\end{align*}

\begin{definition}
For a given source $(\bm{X},\bm{Y})$ and a sequence of functions $\bm{f}$, a rate $R$ is defined to be
achievable if there exists a sequence $\{ \Phi_n \}_{n=1}^\infty$ of codes satisfying
\begin{align*}
\lim_{n\to \infty} \error(\Phi_n|f_n) &= 0
\intertext{and}
\limsup_{n\to\infty} \frac{1}{n} \log |\cM_n| &\le R.
\end{align*}
The optimal rate for computing $\bm{f}$, denoted by $R(\bm{X}|\bm{Y}|\bm{f})$, is the infimum of all achievable rates.
\end{definition}

\begin{definition}[SW Rate]
For a given source $(\bm{X},\bm{Y})$, the optimal rate $R(\bm{X}|\bm{Y}|\bm{f}^{\mathsf{id}})$
for the sequence $\bm{f}^{\mathsf{id}} = \{ f_n^{\mathsf{id}} \}_{n=1}^\infty$ of identity functions is 
called the {\em Slepian-Wolf (SW) rate}, and denoted by $R_{\mathsf{SW}}(\bm{X}|\bm{Y})$.
\end{definition}
Note that $R_{\mathsf{SW}}(\bm{X}|\bm{Y})$ is a trivial upper bound on $R(\bm{X}|\bm{Y}|\bm{f})$.

The following class of sources was introduced in \cite{DistribComp}, and it plays an important role
in Section \ref{sec:idea} and Section \ref{sec:full_support}.
\begin{definition}[Smooth Source]
A general source $(\bm{X},\bm{Y})$ is said to be {\em smooth} with respect to $\bm{Y}$ if
there exists a constant $0 < q < 1$, which does not depend on $n$, satisfying
\begin{align*}
P_{X^n Y^n}(\bm{x},\hat{\bm{y}}) \ge q P_{X^n Y^n}(\bm{x},\bm{y})
\end{align*}
for every $\bm{x} \in \cX^n$ and $\bm{y},\hat{\bm{y}} \in \cY^n$ with $d_H(\bm{y},\hat{\bm{y}}) =1$,
where $d_H(\cdot,\cdot)$ is the Hamming distance.
\end{definition}

The class of smooth sources is a natural generalization of i.i.d. sources with positivity 
condition studied in \cite{AhlswedeCsiszar81, HanKobayashi87}, and 
enables us to study distributed computation for a variety of sources in a unified manner. 
Indeed this class contains 
sources with memory, such as Markov sources with positive transition matrices, or
non-ergodic sources, such as mixtures of i.i.d. sources with positivity condition;
see \cite{DistribComp} for the detail.

\section{Motivating Idea}\label{sec:idea}

To explain the idea of converse proof approach used throughout the paper, let us consider the inner product function
$f_n(\bm{x},\bm{y}) = \bm{x} \cdot \bm{y}$ as a simple example, where ${\cal X}={\cal Y}=\{0,1\}$ and the inner product is computed
with modulo $2$. In fact, since the inner product function is a sensitive function in the sense of \cite{AhlswedeCsiszar81},
the optimal rate for computing the function is $R(\bm{X}|\bm{Y}|\bm{f}) = R_{\mathsf{SW}}(\bm{X}|\bm{Y})$ for every smooth
sources \cite{DistribComp}. We shall provide a conceptually simple proof of this statement in this section. 

Let $(\varphi_n,\psi_n)$ be a code with vanishing error probability:
\begin{align}
 \Pr\left(\psi_n(\varphi_n(X^n),Y^n)\neq f_n(X^n,Y^n)\right)&=\sum_{\bx,\by}P_{X^nY^n}(\bx,\by)\bm{1}\left[\psi_n(\varphi_n(\bx),\by)\neq f_n(\bx,\by)\right]\\
&\leq\varepsilon_n.
\end{align}
Since message $\varphi_n(X^n)$ is encoded without knowing the realization of side-information $Y^n$, if we input $Y^n \oplus \bm{e}_i$
to $\psi_n(\varphi_n(X^n), \cdot)$ instead of $Y^n$, we expect it will outputs $f_n(X^n,Y^n \oplus \bm{e}_i)$ with high probability, where 
$\bm{e}_i$ is a vector such that $i$th component is $1$ and other components are $0$. In fact, this intuition is true and the following bound holds:
\begin{align}
\lefteqn{ \Pr\left( \psi_n(\varphi_n(X^n), Y^n \oplus \bm{e}_i) \neq f_n(X^n, Y^n \oplus \bm{e}_i) \right) } \\
&= \sum_{\bm{x},\bm{y}} P_{X^n Y^n}(\bm{x},\bm{y}) \mathbf{1}\left[ \psi_n(\varphi_n(\bm{x}), \bm{y} \oplus \bm{e}_i) \neq f_n(\bm{x},\bm{y} \oplus \bm{e}_i) \right] \\
&\le \sum_{\bm{x},\bm{y}} \frac{1}{q} P_{X^n Y^n}(\bm{x},\bm{y} \oplus \bm{e}_i) \mathbf{1}\left[ \psi_n(\varphi_n(\bm{x}), \bm{y} \oplus \bm{e}_i) \neq f_n(\bm{x},\bm{y} \oplus \bm{e}_i) \right] \\
&\le \frac{\varepsilon_n}{q},
\end{align}
where the first inequality follows from the property of the smooth source, and the second inequality follows from the definition
of the error probability. Then, by the union bound, the above bound implies 
\begin{align}
\Pr\left( \psi_n(\varphi_n(X^n),Y^n)\neq f_n(X^n,Y^n) \mbox{ or } \psi_n(\varphi_n(X^n), Y^n \oplus \bm{e}_i) \neq f_n(X^n, Y^n \oplus \bm{e}_i)  \right) \le \frac{2 \varepsilon_n}{q}.
\end{align}
If the decoder reproduces $f_n(X^n,Y^n)$ and $f_n(X^n, Y^n \oplus \bm{e}_i)$ correctly, it can compute $X_i = X^n \cdot Y^n \oplus X^n \cdot (Y^n \oplus \bm{e}_i)$.
By conducting this procedure for $1 \le i \le n$, the decoder can reproduce an estimate $W^n$ of $X^n$ such that the bit error probability is small:
\begin{align}
\mathbb{E}\left[ \frac{1}{n} d_H(X^n, W^n) \right]
&=  \sum_{i=1}^n \frac{1}{n} \Pr( X_i \neq W_i) \\
&\le \frac{2 \varepsilon_n}{q}.
\end{align} 
By the Markov inequality, for any $\beta > 0$, we have
\begin{align}
\Pr\left( \frac{1}{n} d_H(X^n,W^n) \ge \beta \right) \le \frac{2 \varepsilon_n}{q\beta}.
\end{align}
From Lemma \ref{lemma:boosting} in Appendix \ref{appendix:misc}, if the encoder send additional message of negligible rate $\delta$, then the decoder 
can reproduce an estimate $\hat{X}^n$ of $X^n$ such that the block error probability is small:
\begin{align}
\Pr\left( X^n \neq \hat{X}^n \right) \le \frac{2 \varepsilon_n}{q\beta} + \nu_n(\beta) 2^{- n \delta}.
\end{align}
In fact, the righthand side of the above bound vanishes as $n\to\infty$ if we adjust parameters $\beta, \delta$ appropriately. 
This means that there exists a Slepian-Wolf coding scheme whose rate is asymptotically the same as the given code $(\varphi_n,\psi_n)$
to compute $f_n$, which implies $R(\bm{X}|\bm{Y}|\bm{f}) \ge R_{\mathsf{SW}}(\bm{X}|\bm{Y})$.

Two key observations of the above argument are the following: because of the nature of distributed computing,
i.e., the message is encoded without knowing the realization of side-information,
the list $(f_n(\bm{x},b\bm{y}^{(-i)}): b \in {\cal Y})$ is inevitably conveyed to the decoder with small error probability, 
where $b\bm{y}^{(-i)}$ is the sequence such that $i$th element $y_i$ of $\bm{y}$ is replaced by $b$;
and $i$th element $x_i$ can be determined from the list. 
More precisely, $x_i$ is determined by the function $\xi_n^{(i)}$ defined by
\begin{align} \label{eq:xi-inner-product}
x_i = \xi_n^{(i)}\left(\left(f_n(\bm{x},b\by^{(-i)}):b\in\cY\right)\right) :=
f_n(\bm{x},0\by^{(-i)}) \oplus f_n(\bm{x},1\by^{(-i)}).
\end{align}
Because of these facts, computing the inner product requires as large rate as the Slepian-Wolf coding.
For functions other than the inner product function, the list may not determine the value of $x_i$ in general; however, 
the list may give some partial information about $x_i$, i.e., a subset of ${\cal X}$ such that $x_i$ belongs. 
In the next section, we will introduce a class of functions that have such a property.

\section{Results for Smooth Sources}\label{sec:full_support}

\subsection{Informative Functions}

Let $\bcX$ be a partition of $\cX$; i.e., $\bcX=\{\cC_1,\cC_2,\dots,\cC_t\}$ is a set of nonempty subsets $\cC_i\subseteq\cX$ ($i=1,\dots,t$) satisfying $\cC_i\cap\cC_j=\emptyset$ ($i\neq j$) and $\cX=\bigcup_{\cC\in\bcX}\cC$.
For each $x\in\cX$, the subset $\cC\in\bcX$ satisfying $x\in\cC$ is uniquely determined and denoted by $[x]_{\bcX}$.
For a sequence $\bx\in\cX^n$, let $[\bx]_{\bcX}:=([x_1]_{\bcX},[x_2]_{\bcX},\dots,[x_n]_{\bcX})$.

For a symbol $a\in\cX$, a sequence $\bx\in\cX^n$, and an index $i\in[1:n]$, let $a\bx^{(-i)}$ be the sequence such that $i$th element $x_i$ of $\bx$ is replaced by $a$. For $b\in\cY$, $\by\in\cY^n$, and $i\in[1:n]$, $b\by^{(-i)}$ is defined similarly.
For a given permutation $\sigma$ on $[1:n]$ and a sequence $\bx\in\cX^n$, 
we denote by $\sigma(\bx)$ the sequence $\hbx\in\cX^n$ satisfying $\hx_i=x_{\sigma(i)}$ for every $i\in[1:n]$.

In the last paragraph of Section \ref{sec:idea}, for the inner product function $f_n$, 
we observed that $i$th symbol $x_i$ can be determined from the list $(f_n(\bm{x},b\bm{y}^{(-i)}): b \in {\cal Y})$
by the function $\xi_n^{(i)}$ defined by \eqref{eq:xi-inner-product}. In other words, for the finest partition ${\cal X} \equiv \{ \{0\}, \{1\}\}$,
the function $\xi_n^{(i)}$ identifies which subset of the partition $x_i$ belongs to. For functions other than the inner product,
the function identifying the subset may not exists for the finest partition; however, even in that case, a function identifying
the subset for a coarser partition may exist. Motivated by this observation, we shall introduce the class of informative
function as follows.

\begin{definition}[Informative Function]
\label{def:informative}
Let $\bcX$ be a partition of $\cX$.
A function $f_n$ is said to be $\bcX$-informative if $f_n$ satisfies the following conditions:
\begin{enumerate}
 \item\label{c1:def:informative} For each $i\in[1:n]$,
 there exists a mapping $\xi_n^{(i)}\colon\cZ_n^{\lvert\cY\rvert}\to\bcX$ such that, for any $a\in\cX$ and $(\bx,\by)\in\cX^n\times\cY^n$,
\begin{align} \label{eq:informative-1}
 \xi_n^{(i)}\left(\left(f_n(a\bx^{(-i)},b\by^{(-i)}):b\in\cY\right)\right)=[a]_{\bcX}.
\end{align}
 \item\label{c2:def:informative} For every $(\bx,\by)\in\cX^n\times\cY^n$ and any permutation $\sigma$ on $[1:n]$ satisfying $[\sigma(\bx)]_{\bcX}=[\bx]_{\bcX}$,
\begin{align}
 f_n(\sigma(\bx),\by)=f_n(\bx,\by).
\end{align}
\end{enumerate}
\end{definition}

\begin{remark}
Condition (\ref{c1:def:informative}) of Definition \ref{def:informative} can be rewritten as follows:
for each $i\in[1:n]$ and for any $a\in\cX$, the subset $\cC\in\bcX$ satisfying $a\in\cC$ can be uniquely determined from the list 
$\left(f_n(a\bx^{(-i)},b\by^{(-i)}):b\in\cY\right)$ irrespective of $(\bx,\by)\in\cX^n\times\cY^n$.
\end{remark}

Condition (\ref{c1:def:informative}) of Definition \ref{def:informative}, which will be used 
in the converse part of Theorem \ref{thm:full_support}, 
is motivated by the converse argument described in Section \ref{sec:idea}.
On the other hand, Condition (\ref{c2:def:informative}) of Definition \ref{def:informative} will be used in the achievability
part of Theorem \ref{thm:full_support}. Although the motivation of Condition (\ref{c2:def:informative}) is subtle,
the following proposition partially motivates Condition (\ref{c2:def:informative}) of Definition \ref{def:informative},
which will be proved in Appendix \ref{appendix:proof-proposition:finest-partition}.

\begin{proposition} \label{proposition:finest-partition}
For a given $\bcX$-informative function $f_n$, 
the partition $\bcX =\{\cC_1,\cC_2,\dots,\cC_t\}$ is the finest partition satisfying Condition (\ref{c1:def:informative}); in other words, 
for any partition $\bcX^\prime = \{\cC_1^\prime,\cC_2,\dots,\cC_s^\prime\}$ satisfying Condition (\ref{c1:def:informative}),
it holds that,  for every $1 \le k \le t$, 
$\cC_k \subseteq \cC_\ell^\prime$ for some $1 \le \ell \le s$.
\end{proposition}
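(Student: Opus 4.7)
The plan is a proof by contradiction that couples the two conditions of $\bcX$-informativeness. Suppose the claim fails: some partition $\bcX^\prime$ satisfies Condition (\ref{c1:def:informative}) while a block $\cC_k \in \bcX$ is split across two distinct blocks of $\bcX^\prime$. Then there exist $a_1, a_2 \in \cC_k$ with $[a_1]_{\bcX^\prime} = \cC_{\ell_1}^\prime \neq \cC_{\ell_2}^\prime = [a_2]_{\bcX^\prime}$. The goal is to exhibit a single list of $f_n$-values that the hypothesized decoder $\xi_n^{\prime(i)}$ would be forced to send to both $\cC_{\ell_1}^\prime$ and $\cC_{\ell_2}^\prime$, which is impossible.

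Assuming $n \geq 2$, the key construction picks distinct positions $i, j \in [1:n]$ and sets $\bx_0 = (a_1, \ldots, a_1) \in \cX^n$. Consider the two sequences $a_2\bx_0^{(-i)}$ (with $a_2$ at position $i$ and $a_1$ elsewhere) and $\tilde{\bx} := a_2\bx_0^{(-j)}$ (with $a_2$ at position $j$ and $a_1$ elsewhere). These are images of each other under the transposition of $i$ and $j$, and both have the constant $\bcX$-class sequence $(\cC_k, \ldots, \cC_k)$ because $a_1, a_2 \in \cC_k$. Condition (\ref{c2:def:informative}) therefore yields $f_n(a_2\bx_0^{(-i)}, \by) = f_n(\tilde{\bx}, \by)$ for every $\by \in \cY^n$.

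Now fix any $\by \in \cY^n$ and apply Condition (\ref{c1:def:informative}) for $\bcX^\prime$ at position $i$ in two ways. With background $\bx_0$ and substituted symbol $a_2$, the list $(f_n(a_2\bx_0^{(-i)}, b\by^{(-i)}): b \in \cY)$ is mapped by $\xi_n^{\prime(i)}$ to $\cC_{\ell_2}^\prime$. With background $\tilde{\bx}$ and substituted symbol $a_1$, the substitution leaves $\tilde{\bx}$ unchanged because $\tilde{\bx}_i = a_1$, so the list is $(f_n(\tilde{\bx}, b\by^{(-i)}): b \in \cY)$ and is mapped by $\xi_n^{\prime(i)}$ to $\cC_{\ell_1}^\prime$. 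The identity from the preceding paragraph shows these two lists agree coordinate-by-coordinate, so $\xi_n^{\prime(i)}$ would have to send a single input to two distinct classes, the desired contradiction.

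The main subtlety I anticipate is selecting the two backgrounds in the two applications of Condition (\ref{c1:def:informative}) so that the resulting substituted sequences form a class-preserving permutation pair—thereby enabling Condition (\ref{c2:def:informative}) to equate their $f_n$-values—while simultaneously corresponding to different substituted symbols at position $i$. The device that makes this work is preloading one background with the marker $a_2$ at the auxiliary position $j$ and exploiting that substituting $a_1$ for $a_1$ is a no-op. The argument tacitly assumes $n \geq 2$; the case $n = 1$ is degenerate since Condition (\ref{c2:def:informative}) is vacuous there, but this plays no role in the asymptotic applications motivating the proposition.
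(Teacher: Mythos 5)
Your proof is correct and follows essentially the same route as the paper's: both take two symbols lying in one block of $\bcX$ but in different blocks of $\bcX^\prime$, use Condition (\ref{c2:def:informative}) with the transposition of two coordinates $i,j$ to equate the two lists of function values, and then apply Condition (\ref{c1:def:informative}) for $\bcX^\prime$ at position $i$ to force $\xi_n^{\prime(i)}$ to output two distinct classes on the same input. The only cosmetic differences are your choice of the constant background $(a_1,\dots,a_1)$ (the paper uses an arbitrary sequence with the two symbols at positions $i$ and $j$) and your omission of the paper's preliminary remark that partitions satisfying Condition (\ref{c1:def:informative}) are closed under common refinement, which is not needed for the ``in other words'' form of the claim you prove.
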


In fact, as we will see in Example \ref{example:non-informative} after Theorem \ref{thm:full_support},
Condition (\ref{c2:def:informative}) of Definition \ref{def:informative} is stronger than $\bcX$ being 
the finest partition satisfying Condition (\ref{c1:def:informative}); there exists a function
such that the finest partition satisfying Condition (\ref{c1:def:informative}) may not satisfy Condition (\ref{c2:def:informative}).

The class of functions that are 
$\bcX$-informative for some partition $\bcX$ includes several important functions as shown in Propositions \ref{prop:simbolwise}, \ref{prop:type}, and \ref{prop:modsum} below.

At first, we consider a symbol-wise function $f_n(\bx,\by)=(f(x_1,y_1),\dots,f(x_n,y_n))$ defined from $f\colon\cX\times\cY\to\cV$.
For a function $f$ on $\cX\times\cY$, let $\bcX_f$ be the partition of $\cX$ such that two symbols $x$ and $\hx$ are in the same subset if and only if $f(x,y)=f(\hx,y)$ for all $y\in\cY$.

\begin{proposition}
\label{prop:simbolwise}
A symbol-wise function $f_n\colon\cX^n\times\cY^n\to\cV^n$ defined from $f\colon\cX\times\cY\to\cV$ is $\bcX_f$-informative.
\end{proposition}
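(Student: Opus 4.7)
The plan is to verify the two conditions of Definition \ref{def:informative} by reading off what symbol-wise structure buys us; neither step is really difficult, but the construction of $\xi_n^{(i)}$ for Condition~\eqref{c1:def:informative} needs to be written down explicitly.

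For Condition~\eqref{c1:def:informative}, the key observation is that for a symbol-wise $f_n$, the $i$th coordinate of $f_n(a\bx^{(-i)},b\by^{(-i)})$ is exactly $f(a,b)$, independent of the other coordinates of $\bx$ and $\by$. So I would define $\xi_n^{(i)}\colon \cV^{n|\cY|}\to\bcX_f$ as follows: given a list $(\bmv_b : b\in\cY)$ with $\bmv_b\in\cV^n$, extract the function $g\colon \cY\to\cV$ defined by $g(b):=\bmv_b(i)$ (the $i$th coordinate of $\bmv_b$); if $g$ coincides with $y\mapsto f(a,y)$ for some $a\in\cX$, output the unique $\cC\in\bcX_f$ that contains every such $a$, and otherwise output any fixed element of $\bcX_f$. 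The uniqueness of $\cC$ is exactly the content of the definition of $\bcX_f$: two symbols $a,\hat a$ lie in the same class iff $f(a,\cdot)=f(\hat a,\cdot)$, so the function $b\mapsto f(a,b)$ determines $[a]_{\bcX_f}$. Plugging the list $(f_n(a\bx^{(-i)},b\by^{(-i)}):b\in\cY)$ into $\xi_n^{(i)}$, the extracted $g$ is $b\mapsto f(a,b)$, which yields $[a]_{\bcX_f}$.

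For Condition~\eqref{c2:def:informative}, suppose $\sigma$ is a permutation on $[1:n]$ with $[\sigma(\bx)]_{\bcX_f}=[\bx]_{\bcX_f}$. Reading this componentwise gives $[x_{\sigma(i)}]_{\bcX_f}=[x_i]_{\bcX_f}$ for every $i\in[1:n]$. By the definition of $\bcX_f$, this means $f(x_{\sigma(i)},y)=f(x_i,y)$ for all $y\in\cY$, and in particular for $y=y_i$. Therefore the $i$th coordinate of $f_n(\sigma(\bx),\by)$ equals that of $f_n(\bx,\by)$ for every $i$, which is the required equality.

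There is no real obstacle here; the only subtlety is making sure that $\xi_n^{(i)}$ is well-defined as a function on the whole ambient set $\cZ_n^{|\cY|}=\cV^{n|\cY|}$ (rather than only on images of genuine lists), which is handled by the ``otherwise output any fixed element'' clause above. Both conditions then reduce to a single-step unpacking of the definition of $\bcX_f$.
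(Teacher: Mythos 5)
Your proposal is correct and follows essentially the same route as the paper's proof: Condition (\ref{c1:def:informative}) holds because the $i$th coordinates of the list recover $(f(a,b):b\in\cY)$, which determines $[a]_{\bcX_f}$ by definition of $\bcX_f$, and Condition (\ref{c2:def:informative}) follows componentwise from $[x_{\sigma(i)}]_{\bcX_f}=[x_i]_{\bcX_f}$ implying $f(x_{\sigma(i)},y)=f(x_i,y)$ for all $y$. Your write-up merely makes the map $\xi_n^{(i)}$ (and its extension to arbitrary lists) explicit, which the paper leaves implicit.
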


Next, we consider a composition of functions, where 
the inner function is symbol-wise and the outer function is the type. Fix a function $f\colon\cX\times\cY\to\cV=\{0,1,\dots,m-1\}$. Then, let $f_n^{\mathsf{t}}$ be the function computing the type of the symbol-wise function $f_n$ defined from $f$; i.e.,
\begin{align}
 f_n^{\mathsf{t}}(\bx,\by):= P_{f_n(\bx,\by)},\quad (\bx,\by)\in\cX^n\times\cY^n.
\label{eq:typefunction}
\end{align}
To characterize the property of $f_n^{\mathsf{t}}$, let us introduce $\hat{f}^{\mathsf{t}}\colon\cX\times\cY\to\cV\cup\{m\}$ as
\begin{align}
 \hat{f}^{\mathsf{t}}(x,y):=
\begin{cases}
 m & \text{if }f(x,\cdot)\text{ is constant}\\
 f(x,y) & \text{otherwise.}
\end{cases}
\end{align}

\begin{proposition}
\label{prop:type}
A function $f_n^{\mathsf{t}}\colon\cX^n\times\cY^n\to\cP_n(\cV)$ defined from $f\colon\cX\times\cY\to\cV$ as \eqref{eq:typefunction} is $\bcX_{\hat{f}^{\mathsf{t}}}$-informative.
\end{proposition}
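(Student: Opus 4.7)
The plan is to verify the two defining conditions of Definition~\ref{def:informative} for the partition $\bcX_{\hat{f}^{\mathsf{t}}}$. A useful preliminary observation is that this partition has a transparent description: all symbols $x\in\cX$ for which $f(x,\cdot)$ is constant are collapsed into a single class $\cC_0$ (since $\hat{f}^{\mathsf{t}}(x,\cdot)\equiv m$ for such $x$), while two symbols with non-constant rows belong to the same class exactly when $f(x,\cdot)=f(\hx,\cdot)$.

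For Condition~(\ref{c1:def:informative}), the key identity is that, after multiplying through by $n$,
\begin{align}
n\cdot f_n^{\mathsf{t}}(a\bx^{(-i)},b\by^{(-i)}) = T + \delta_{f(a,b)},
\end{align}
where $T:=\sum_{j\neq i}\delta_{f(x_j,y_j)}$ is a multiset on $\cV$ that does not depend on $(a,b)$. Thus the list indexed by $b\in\cY$ carries information about $a$ only through the single point masses $\delta_{f(a,b)}$. I would define $\xi_n^{(i)}$ by cases. If all entries of the list coincide, then $f(a,\cdot)$ is necessarily constant and I output $\cC_0$. Otherwise, any two distinct entries $L_b\neq L_{b'}$ have signed difference $\frac{1}{n}(\delta_{f(a,b)}-\delta_{f(a,b')})$, whose singleton positive and negative supports identify $f(a,b)$ and $f(a,b')$; sweeping $b$ through $\cY$ then recovers the entire row $f(a,\cdot)$, which in the non-constant case equals $\hat{f}^{\mathsf{t}}(a,\cdot)$ and hence determines $[a]_{\bcX_{\hat{f}^{\mathsf{t}}}}$. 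Well-definedness of this recipe on each realizable list needs a brief consistency check: if the same list arises from two triples $(a_k,\bx^{(k)},\by^{(k)})$, the reconstruction forces either both rows $f(a_k,\cdot)$ constant, in which case $a_1,a_2\in\cC_0$, or $f(a_1,\cdot)=f(a_2,\cdot)$ non-constant; in either case $[a_1]=[a_2]$.

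For Condition~(\ref{c2:def:informative}), the hypothesis $[\sigma(\bx)]_{\bcX_{\hat{f}^{\mathsf{t}}}}=[\bx]_{\bcX_{\hat{f}^{\mathsf{t}}}}$ forces $\sigma$ to permute each index set $I_\cC:=\{i:x_i\in\cC\}$ within itself, for every class $\cC\in\bcX_{\hat{f}^{\mathsf{t}}}$. The desired equality $f_n^{\mathsf{t}}(\sigma(\bx),\by)=f_n^{\mathsf{t}}(\bx,\by)$ reduces to the multiset identity $\sum_i\delta_{f(x_{\sigma(i)},y_i)}=\sum_i\delta_{f(x_i,y_i)}$, which I would verify class by class. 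On a non-constant class $\cC$, every $x\in\cC$ shares the same row $g_\cC$, so $f(x_{\sigma(i)},y_i)=g_\cC(y_i)=f(x_i,y_i)$ holds pointwise. On $\cC_0$, the value $f(x_i,y_i)=c_{x_i}$ depends only on $x_i$, so the sum over $I_{\cC_0}$ is a pure shuffle and is preserved by the bijection $\sigma|_{I_{\cC_0}}$.

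The main conceptual subtlety---and the reason for passing to $\hat{f}^{\mathsf{t}}$ rather than working with $f$ directly---is that the list genuinely recovers $f(a,\cdot)$ only in the non-constant case; in the constant case it confounds different constant values (a list of identical entries is compatible with any constant $c$ because the background multiset $T$ is unknown), which forces the partition to collapse all constant-row symbols into a single class. Pleasantly, that same collapsing is exactly what makes Condition~(\ref{c2:def:informative}) work, since within $\cC_0$ the value $f(x_i,y_i)$ depends only on $x_i$ and constant-row symbols can be shuffled freely without disturbing the output type.
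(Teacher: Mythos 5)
Your proposal is correct and follows essentially the same route as the paper: Condition (\ref{c1:def:informative}) is verified by noting that differences of the listed types are $\tfrac{1}{n}(\delta_{f(a,b)}-\delta_{f(a,b')})$, so a constant list certifies a constant row while a non-constant list lets one reconstruct the whole row $f(a,\cdot)=\hat{f}^{\mathsf{t}}(a,\cdot)$ (the paper does the same comparison using $Q_0$ as the reference type), and Condition (\ref{c2:def:informative}) is verified by splitting the indices into the constant-row class and the remaining classes, exactly as in the paper's $\cC^*$ argument. Your explicit well-definedness check for $\xi_n^{(i)}$ and the explicit shuffle argument on $\cC_0$ only spell out steps the paper leaves implicit; no gap.
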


Lastly we consider the modulo-sum of function values. For a given $f\colon\cX\times\cY\to\cV=\{0,1,\dots,m-1\}$, let 
$f_n^{\oplus}$ be the function defined as
\begin{align}
 f_n^{\oplus}(\bx,\by):= \sum_{i=1}^nf(x_i,y_i)\quad (\bmod\ m),\quad (\bx,\by)\in\cX^n\times\cY^n.
\label{eq:modsum}
\end{align}
To characterize the property of $f_n^{\oplus}$, let us introduce $\hat{f}^{\oplus}$ on $\cX\times\cY$ as
\begin{align}
 \hat{f}^{\oplus}(x,y):= f(x,y+1)-f(x,y)\quad (\bmod\ m),\quad y\in\cY
\end{align}
where $f(x,\lvert\cY\rvert)=f(x,0)$.

\begin{proposition}
\label{prop:modsum}
A function $f_n^{\oplus}\colon\cX^n\times\cY^n\to\cV$ defined from $f\colon\cX\times\cY\to\cV$ as \eqref{eq:modsum} is $\bcX_{\hat{f}^{\oplus}}$-informative.
\end{proposition}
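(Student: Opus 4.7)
The plan is to verify both conditions of Definition \ref{def:informative} directly, exploiting the algebraic structure of modular summation.

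For Condition (\ref{c1:def:informative}), I would first note the separation identity
\begin{align*}
f_n^{\oplus}(a\bx^{(-i)},b\by^{(-i)})= f(a,b)+S_i(\bx,\by)\pmod{m},
\end{align*}
where $S_i(\bx,\by)=\sum_{j\neq i}f(x_j,y_j)$ does not depend on $a$ or $b$. Hence given the list $(\ell_b:b\in\cY)$ with $\ell_b=f_n^{\oplus}(a\bx^{(-i)},b\by^{(-i)})$, forming consecutive differences $\ell_{b+1}-\ell_b\pmod m$ (with $b+1$ interpreted modulo $\lvert\cY\rvert$) cancels the unknown offset $S_i(\bx,\by)$ and recovers exactly the tuple $(\hat f^{\oplus}(a,b):b\in\cY)$. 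By the very definition of $\bcX_{\hat f^{\oplus}}$, this tuple determines $[a]_{\bcX_{\hat f^{\oplus}}}$, so defining $\xi_n^{(i)}$ by this difference-then-classify procedure gives a map independent of $(\bx,\by)$, which is what Condition (\ref{c1:def:informative}) demands.

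For Condition (\ref{c2:def:informative}), the main lemma I would isolate is: if $x,\hx$ belong to the same class of $\bcX_{\hat f^{\oplus}}$, then $f(x,y)-f(\hx,y)$ is \emph{independent of} $y$. This is immediate, since $\hat f^{\oplus}(x,y)=\hat f^{\oplus}(\hx,y)$ for every $y$ rearranges to $f(x,y+1)-f(\hx,y+1)=f(x,y)-f(\hx,y)$, i.e., the difference is constant in $y$. Now fix any $(\bx,\by)$ and any permutation $\sigma$ with $[\sigma(\bx)]_{\bcX_{\hat f^{\oplus}}}=[\bx]_{\bcX_{\hat f^{\oplus}}}$, so that $x_{\sigma(i)}$ and $x_i$ lie in a common class for each $i$. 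Fix an arbitrary $y_0\in\cY$. Applying the lemma termwise,
\begin{align*}
f_n^{\oplus}(\sigma(\bx),\by)-f_n^{\oplus}(\bx,\by)
=\sum_{i=1}^n\bigl[f(x_{\sigma(i)},y_i)-f(x_i,y_i)\bigr]
=\sum_{i=1}^n\bigl[f(x_{\sigma(i)},y_0)-f(x_i,y_0)\bigr]\pmod{m},
\end{align*}
and the final sum vanishes because $\sigma$ is a permutation of $[1:n]$, so $\sum_i f(x_{\sigma(i)},y_0)=\sum_i f(x_i,y_0)$. This gives $f_n^{\oplus}(\sigma(\bx),\by)=f_n^{\oplus}(\bx,\by)$, establishing Condition (\ref{c2:def:informative}).

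I do not expect a serious obstacle: the proof is essentially algebraic bookkeeping. The only subtle point is recognizing that the $y$-independence of intra-class differences, which is just a restatement of the equivalence relation defining $\bcX_{\hat f^{\oplus}}$, is precisely what converts the permutation-invariance of $\sum_i f(x_i,y_0)$ into the permutation-invariance of $f_n^{\oplus}(\bx,\by)$ for \emph{every} $\by$. Care must also be taken with the two distinct moduli ($m$ for function values and $\lvert\cY\rvert$ for the cyclic shift in $\hat f^{\oplus}$), but these are purely notational.
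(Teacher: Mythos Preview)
Your proposal is correct and essentially mirrors the paper's proof. For Condition (\ref{c1:def:informative}) both you and the paper form consecutive differences $\ell_{b+1}-\ell_b$ to recover $(\hat f^{\oplus}(a,b):b\in\cY)$; for Condition (\ref{c2:def:informative}) the paper decomposes $f(x_i,y_i)=f(x_i,0)+[f(x_i,y_i)-f(x_i,0)]$ and observes that the bracketed term depends only on $[x_i]_{\hat f^{\oplus}}$ and $y_i$, which is exactly your lemma that $f(x,y)-f(\hat x,y)$ is $y$-independent within a class, rewritten.
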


\begin{example}
Let us consider a function $f$ given in Table \ref{table:example_f}.
Then we can verify that
\begin{align}
 \bcX_f &= \{\{0\},\{1,2\},\{3\},\{4\}\},\\
 \bcX_{\hat{f}^{\mathsf{t}}} &= \{\{0,4\},\{1,2\},\{3\}\},\\
 \bcX_{\hat{f}^\oplus} &= \{\{0,4\},\{1,2,3\}\}.
\end{align}

\begin{table}[tb]
\centering{
\caption{$f\colon\cX\times\cY\to\cV=\{0,1,\dots,6\}$}\label{table:example_f}
\begin{tabular}{c|ccc}
 $x\setminus y$& 0& 1& 2\\\hline
 0 & 0 & 0 & 0 \\
 1 & 1 & 2 & 3 \\
 2 & 1 & 2 & 3 \\
 3 & 4 & 5 & 6 \\
 4 & 1 & 1 & 1
\end{tabular}
}
\end{table}
\end{example}

Proofs of Propositions \ref{prop:simbolwise}, \ref{prop:type}, and \ref{prop:modsum} are given in Appendix \ref{appendix:proof_propositions}.

\subsection{Coding Theorem}

Fix a partition $\bcX$ of $\cX$.
Then, from a pair $(X^n,Y^n)$ of RVs on $\cX^n\times\cY^n$, we can define
$([X^n]_{\bcX},Y^n)$ on 
$\bcX^n\times\cY^n$ such as
\begin{align}
 P_{[X^n]_{\bcX}Y^n}(\overline{\bx},\by):=\sum_{\bx:[\bx]_{\bcX}=\overline{\bx}}P_{X^nY^n}(\bx,\by)
\end{align}
for $(\overline{\bx},\by) \in \bcX^n\times\cY^n$.
For a given source $(\bX,\bY)$ and a partition $\bcX$ of $\cX$, let $([\bX]_{\bcX},\bY)$ be the source defined by
\begin{align}
 ([\bX]_{\bcX},\bY):=\left\{([X^n]_{\bcX},Y^n)\right\}_{n=1}^\infty.
\end{align}
Note that if $(\bm{X},\bm{Y})$ is an i.i.d.\ source then  $([\bm{X}]_{\bcX},\bm{Y})$ is also i.i.d.\ source.
Further, if $(\bm{X},\bm{Y})$ is smooth with respect to $\bm{Y}$ then 
$P_{X^n Y^n}(\bm{x},\hat{\bm{y}}) \ge q P_{X^n Y^n}(\bm{x},\bm{y})$
for all $\bm{x}$ and 
$\bm{y},\hat{\bm{y}} \in \cY^n$ with $d_H(\bm{y},\hat{\bm{y}}) =1$. 
Taking a summation over $\bm{x}\in\overline{\bx}$, we have
$P_{[X^n]_{\bcX}Y^n}(\overline{\bx},\hat{\by}) \ge q P_{[X^n]_{\bcX}Y^n}(\overline{\bx},\by)$. So, we have the following proposition.

\begin{proposition}
If $(\bm{X},\bm{Y})$ is smooth with respect to $\bm{Y}$, then $([\bm{X}]_{\bcX},\bm{Y})$ is also smooth with respect to $\bm{Y}$
(with the same constant $q$).
\end{proposition}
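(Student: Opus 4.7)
The plan is to directly verify the defining inequality of smoothness for the coarsened source $([\bX]_{\bcX},\bY)$ by summing the corresponding inequality for $(\bX,\bY)$ over the fiber of the quotient map $\bx \mapsto [\bx]_{\bcX}$. In fact, the paragraph immediately preceding the proposition already carries out all the substantive work, so the task here is mainly to organize it into a self-contained argument.

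Concretely, I would fix an arbitrary $\overline{\bx}\in\bcX^n$ and $\by,\hat{\by}\in\cY^n$ with $d_H(\by,\hat{\by})=1$. The smoothness assumption on $(\bX,\bY)$ gives, for every single $\bx\in\cX^n$,
\begin{align}
P_{X^nY^n}(\bx,\hat{\by}) \ge q\, P_{X^nY^n}(\bx,\by).
\end{align}
Summing both sides over the fiber $\{\bx\in\cX^n : [\bx]_{\bcX}=\overline{\bx}\}$ and invoking the definition
\begin{align}
P_{[X^n]_{\bcX}Y^n}(\overline{\bx},\by)=\sum_{\bx:[\bx]_{\bcX}=\overline{\bx}}P_{X^nY^n}(\bx,\by)
\end{align}
on both sides yields
\begin{align}
P_{[X^n]_{\bcX}Y^n}(\overline{\bx},\hat{\by}) \ge q\, P_{[X^n]_{\bcX}Y^n}(\overline{\bx},\by),
\end{align}
which is precisely the smoothness condition for $([\bX]_{\bcX},\bY)$ with the same constant $q$.

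There is no real obstacle; the only things to watch are that the fiber summation preserves the inequality (which holds termwise since $q>0$ is a constant independent of $\bx$) and that the same constant $q$ works uniformly, which is automatic because the bound does not depend on $\bx$ nor on $n$. So this is essentially a one-line observation dressed as a proposition.
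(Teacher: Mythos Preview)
Your proof is correct and is exactly the argument the paper gives in the paragraph preceding the proposition: apply the smoothness inequality termwise and sum over the fiber $\{\bx:[\bx]_{\bcX}=\overline{\bx}\}$ to obtain the same inequality for $P_{[X^n]_{\bcX}Y^n}$ with the same constant $q$.
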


Now we are ready to state our coding theorem for smooth sources.

\begin{theorem}
\label{thm:full_support}
Suppose that $\bmf=\{f_n\}_{n=1}^\infty$ is $\bcX$-informative for some partition $\bcX$ of $\cX$. Then, for any smooth source $(\bX,\bY)$, we have
\begin{align} \label{eq:thm:full_support}
 R(\bX|\bY|\bmf)=R_{\SW}([\bX]_{\bcX}|\bY).
\end{align}
\end{theorem}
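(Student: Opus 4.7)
The proof has the standard two parts. The converse generalizes the argument already used in Section \ref{sec:idea} for the inner product, exploiting Condition (\ref{c1:def:informative}) of Definition \ref{def:informative} together with smoothness. The achievability relies on Condition (\ref{c2:def:informative}) together with a Slepian--Wolf code for the induced source $([\bX]_{\bcX},\bY)$.

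\textbf{Converse.} Let $(\varphi_n,\psi_n)$ be any code for $f_n$ with error probability $\varepsilon_n\to 0$. Smoothness and a union bound over the single-symbol perturbations $\bY\mapsto b\bY^{(-i)}$, $b\in\cY$, yield for each $i\in[1:n]$
\begin{align*}
\Pr\Bigl(\exists b\in\cY\colon \psi_n(\varphi_n(X^n),b\bY^{(-i)})\neq f_n(X^n,b\bY^{(-i)})\Bigr)\le \frac{|\cY|\,\varepsilon_n}{q}.
\end{align*}
On the complementary event, the decoder knows the whole list $\bigl(f_n(X^n,b\bY^{(-i)}):b\in\cY\bigr)$, hence by Condition (\ref{c1:def:informative}) of Definition \ref{def:informative} applied with $a=X_i$ it recovers $[X_i]_{\bcX}$ by applying $\xi_n^{(i)}$. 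Running this for every coordinate produces an estimate of $[X^n]_{\bcX}$ with expected bit-error at most $|\cY|\varepsilon_n/q$. Markov's inequality followed by Lemma \ref{lemma:boosting} with an arbitrarily small extra rate $\delta$ boosts this to vanishing block-error reconstruction of $[X^n]_{\bcX}$, which is a Slepian--Wolf code for $([\bX]_{\bcX},\bY)$ of rate $R+\delta$; hence $R\ge R_{\SW}([\bX]_{\bcX}|\bY)-\delta$, and letting $\delta\downarrow 0$ completes this direction.

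\textbf{Achievability and main obstacle.} For achievability, use a Slepian--Wolf code for $([\bX]_{\bcX},\bY)$ at rate $R_{\SW}([\bX]_{\bcX}|\bY)+\delta$ and have the encoder losslessly append, for each class $\cC\in\bcX$, a description of the multiset $\{X_i:[X_i]_{\bcX}=\cC\}$; the number of such tuples of multisets is polynomial in $n$, so this extra information costs $O(\log n)$ bits and contributes zero asymptotic rate. Once the decoder recovers $[X^n]_{\bcX}$ correctly, it picks any $\hbx$ whose class sequence equals $[X^n]_{\bcX}$ and whose per-class multisets agree with those transmitted; any such $\hbx$ equals $\sigma(X^n)$ for some class-preserving permutation $\sigma$, so Condition (\ref{c2:def:informative}) of Definition \ref{def:informative} gives $f_n(\hbx,\bY)=f_n(X^n,\bY)$ and the decoder's output is correct. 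The main obstacle lies in the converse: combining the smoothness-based perturbation with Condition (\ref{c1:def:informative}) requires controlling a list of $|\cY|$ perturbed decodings per coordinate simultaneously and then using Lemma \ref{lemma:boosting} to turn the resulting symbol-level guarantee into block-level Slepian--Wolf decoding.
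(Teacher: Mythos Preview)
Your proposal is correct and follows essentially the same approach as the paper: the converse perturbs $Y^n$ in one coordinate, uses smoothness plus Condition~(\ref{c1:def:informative}) to recover $[X_i]_{\bcX}$ symbol-wise, then applies Lemma~\ref{lemma:boosting}; the achievability sends a Slepian--Wolf codeword for $[\bX]_{\bcX}$ together with the marginal type of $\bx$ (equivalently, your per-class multisets) and invokes Condition~(\ref{c2:def:informative}). The one place the paper is slightly more careful is the final step of the converse: since your combined encoder still has input $X^n$ rather than $[X^n]_{\bcX}$, calling it ``a Slepian--Wolf code for $([\bX]_{\bcX},\bY)$'' is not literally in accordance with the definition, and the paper closes this gap by passing through Lemma~\ref{lemma:converse-function-X} to obtain the information-spectrum bound on $P_{[X^n]_{\bcX}|Y^n}$ before invoking the general Slepian--Wolf achievability.
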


To illustrate Theorem \ref{thm:full_support}, let us consider several examples.

\begin{example}
\label{example:joint_type}
When $f(x,y) = (x,y)$,\footnote{Without loss of generality, we identify ${\cal X} \times {\cal Y}$ with ${\cal V} = \{0,1,\ldots,|{\cal X}||{\cal Y}|-1 \}$ in this example.}
i.e., the identity function, then $f_n^\mathsf{t}(\bm{x},\bm{y})$ is the joint type of $(\bm{x},\bm{y})$. 
In this case,  $\hat{f}^\mathsf{t}$ is the identity function and $\overline{{\cal X}}_{\hat{f}^\mathsf{t}} = \{ \{ x \} : x \in {\cal X} \}$.
Thus, Proposition \ref{prop:type} and Theorem \ref{thm:full_support} imply $R(\bX|\bY|\bmf^\mathsf{t}) = R_{\mathsf{SW}}(\bX|\bY)$.
\end{example}

\begin{example}
When $f(x,y) = x$, then $f_n^\mathsf{t}(\bm{x},\bm{y})$ is the marginal type of $\bm{x}$. 
In this case, $\hat{f}^\mathsf{t}(x,y) = m$ for every $(x,y)$, i.e., the constant function, and $\overline{{\cal X}}_{\hat{f}^\mathsf{t}} = \{ {\cal X} \}$. 
Thus, Proposition \ref{prop:type} and Theorem \ref{thm:full_support} imply $R(\bX|\bY|\bmf^\mathsf{t}) = 0$.
\end{example}

\begin{example}
When ${\cal X} = {\cal Y} = \{0,1\}$ and $f(x,y) = x \oplus y$, let us consider the modulo-sum function $f_n^\oplus$ induced by $f$.
In this case, $\hat{f}^\oplus(x,y) = 1$ for every $(x,y)$ and $\overline{{\cal X}}_{\hat{f}^\oplus} = \{ {\cal X}\}$.
Thus, Proposition \ref{prop:modsum} and Theorem \ref{thm:full_support} imply $R(\bX|\bY|\bmf^\oplus) = 0$.
In fact, the encoder can just send the parity $\oplus_{i=1}^n X_i$. Then, the decoder can reproduce
$f_n^\oplus(X^n,Y^n) = (\oplus_{i=1}^n X_i) \oplus (\oplus_{i=1}^n Y_i)$.
It is interesting to compare this example with the fact that, for the same function $f(x,y) = x \oplus y$,
$R(\bX|\bY|\bmf^\mathsf{t}) = R_{\mathsf{SW}}(\bX|\bY)$.
\end{example}

\begin{example}
\label{example:inner_product}
When ${\cal X} = {\cal Y} = \{0,1\}$ and $f(x,y) = x \wedge y$, let us consider the modulo-sum function $f_n^\oplus$ induced by $f$.
In this case, $\hat{f}^\oplus(x,y) = x$ and $\overline{{\cal X}}_{\hat{f}^\oplus} = \{ \{x \} : x \in {\cal X}\}$. 
Thus, Proposition \ref{prop:modsum} and Theorem \ref{thm:full_support} imply $R(\bX|\bY|\bmf^\oplus) = R_{\mathsf{SW}}(\bX|\bY)$.
Note that $f_n^\oplus$ is the inner product function, and it recovers the result explained in Sec.~\ref{sec:idea}.
\end{example}

Finally, in order to illustrate the role of Condition (\ref{c2:def:informative}) in Definition \ref{def:informative},
let us consider a function that is not $\bcX$-informative 
for any partition $\bcX$ of ${\cal X}$, but the optimal rate can be characterized. 

\begin{example} \label{example:non-informative}
For ${\cal X} = {\cal Y} = \{0,1\}$, let $f_n: {\cal X}^n \times {\cal Y}^n \to \{0,1\}^n$ be the function defined by
\begin{align} \label{eq:non-informative-example}
f_n(\bm{x},\bm{y}) = \big( \bm{1}[x_i \oplus y_i = x_{i+1} \oplus y_{i+1}] : 1 \le i \le n  \big),
\end{align}
where $x_{n+1} = x_1$ and $y_{n+1}=y_1$. For this function, we can verify that 
the trivial partition $\bcX = \{ \{0,1\} \}$ is the only partition that satisfies Condition (\ref{c1:def:informative}) of Definition \ref{def:informative}.
In fact, if Condition (\ref{c1:def:informative}) is satisfied for partition ${\cal X} \equiv \{ \{0\}, \{1\}\}$, then,
for $\bar{\bm{x}} = (x_1\oplus 1,\ldots,x_n \oplus 1)$ and $\bar{a} = a \oplus 1$, we have
\begin{align}
\{a\} &= \xi_n^{(i)}\left(\left(f_n(a\bx^{(-i)},b\by^{(-i)}):b\in\cY\right)\right) \\
&= \xi_n^{(i)}\left(\left(f_n(\bar{a} \bar{\bx}^{(-i)},b\by^{(-i)}):b\in\cY\right)\right) \\
&= \{\bar{a}\},
\end{align}
which is a contradiction. On the other hand, Condition (\ref{c2:def:informative}) is apparently not 
satisfied for the trivial partition $\bcX = \{ \{0,1\} \}$. Thus, this function is not $\bcX$-informative 
for any partition $\bcX$ of ${\cal X}$. However, we can verify that $R(\bX|\bY|\bmf)=R_{\SW}(\bX|\bY)$ as follows.
Suppose that we are given a code to compute the function $f_n$ with
vanishing error probability. If the encoder additionally send one bit, say $x_1$, then the decoder can sequentially 
reproduce all $x_i$s from $f_n(\bm{x},\bm{y})$ and $\bm{y}$, which implies $R_{\SW}(\bX|\bY) \le R(\bX|\bY|\bmf)$.
\end{example}

As we can find from Example \ref{example:non-informative}, partition $\bcX$ being
the finest partition satisfying Condition (\ref{c1:def:informative}) does not imply \eqref{eq:thm:full_support}.
In order to handle functions as in Example \ref{example:non-informative}, we need to consider more 
general ``informative" structure of given functions, which is beyond the scope of this paper.

\subsection{Proof of Theorem \ref{thm:full_support}}
We first prove the converse part
\begin{align}
 R(\bX|\bY|\bmf)\geq R_{\SW}([\bX]_{\bcX}|\bY)
\label{eq_converse:proof:thm:full_support}
\end{align}
and then prove the direct part
\begin{align}
 R(\bX|\bY|\bmf)\leq R_{\SW}([\bX]_{\bcX}|\bY).
\label{eq_direct:proof:thm:full_support}
\end{align}

\begin{IEEEproof}
[Converse part]
Let $(\varphi_n,\psi_n)$ be a code satisfying
\begin{align}
 \Pr\left(\psi_n(\varphi_n(X^n),Y^n)\neq f_n(X^n,Y^n)\right)&=\sum_{\bx,\by}P_{X^nY^n}(\bx,\by)\bm{1}\left[\psi_n(\varphi_n(\bx),\by)\neq f_n(\bx,\by)\right]\\
&\leq\varepsilon_n.
\end{align}
Further, let $\pi_i\colon\cY^n\to\cY^n$ be the permutation that shifts only $i$th symbol of $\by\in\cY^n$; i.e., $y_i\mapsto y_i+1$ ($\bmod \lvert\cY\rvert$). Then, since $(\bX,\bY)$ is smooth, for every $\hat{b}$ ($0\leq\hat{b}\leq\lvert\cY\rvert-1$)
and $i\in[1:n]$, we have
\begin{align}
\lefteqn{\Pr\left(\psi_n(\varphi_n(X^n),\pi_i^{\hat{b}}(Y^n))\neq f_n(X^n,\pi_i^{\hat{b}}(Y^n))\right)}\nonumber\\
&=\sum_{\bx,\by}P_{X^nY^n}(\bx,\by)\bm{1}\left[\psi_n(\varphi_n(\bx),\pi_i^{\hat{b}}(\by))\neq f_n(\bx,\pi_i^{\hat{b}}(\by))\right]\\
&\leq \sum_{\bx,\by}\frac{1}{q}P_{X^nY^n}(\bx,\pi_i^{\hat{b}}(\by))\bm{1}\left[\psi_n(\varphi_n(\bx),\pi_i^{\hat{b}}(\by))\neq f_n(\bx,\pi_i^{\hat{b}}(\by))\right]\\
&\leq \frac{\varepsilon_n}{q}.\label{eq:proof:thm:full_support}
\end{align}
Thus, we have,\footnote{Here we denote by $bY^{(-i)}$ a sequence such that $Y_i$ of $Y^n$ is replaced by $b$.}
\begin{align}
\lefteqn{
 \Pr\left(
\left(\psi_n(\varphi_n(X^n),bY^{(-i)}):b\in\cY\right)\neq \left(f_n(X^n,bY^{(-i)}):b\in\cY\right)
\right)}\nonumber\\
&= \Pr\left(\exists b\in\cY\text{ s.t. }\psi_n(\varphi_n(X^n),bY^{(-i)})\neq f_n(X^n,bY^{(-i)})\right)\\
&= \sum_{\bx,\by}P_{X^nY^n}(\bx,\by)\bm{1}\left[\exists b\in\cY\text{ s.t. }\psi_n(\varphi_n(\bx),b\by^{(-i)})\neq f_n(\bx,b\by^{(-i)})\right]\\
&= \sum_{\bx,\by}P_{X^nY^n}(\bx,\by)\bm{1}\left[\exists 0\leq \hat{b}\leq\lvert\cY\rvert-1\text{ s.t. }\psi_n(\varphi_n(\bx),\pi_i^{\hat{b}}(\by))\neq f_n(\bx,\pi_i^{\hat{b}}(\by))\right]\\
&= \Pr\left(\exists 0\leq \hat{b}\leq\lvert\cY\rvert-1 \text{ s.t. }\psi_n(\varphi_n(X^n),\pi_i^{\hat{b}}(Y^n))\neq f_n(X^n,\pi_i^{\hat{b}}(Y^n))\right)\\
&\leq \frac{\lvert\cY\rvert}{q}\varepsilon_n,
\end{align}
where the last inequality follows from \eqref{eq:proof:thm:full_support} and the union bound.

Since $f_n$ is $\bcX$-informative (and thus the condition (\ref{c1:def:informative}) of Definition \ref{def:informative} holds), there exists a mapping $\xi_n^{(i)}$ such that
\begin{align}
  \Pr\left(
\xi_n^{(i)}\left(
\left(
\psi_n(\varphi_n(X^n),bY^{(-i)}):b\in\cY\right)\right)\neq [X_i]_{\bcX}
\right)
&\leq \frac{\lvert\cY\rvert}{q}\varepsilon_n.
\end{align}
Hence, we can construct a decoder $\tilde{\psi}_n\colon\cM_n\times\cY^n\to\bcX^n$ such that $W^n=\tilde{\psi}_n(\varphi(X^n),Y^n)$ satisfies
\begin{align}
\Ex\left[
\frac{1}{n}d_H([X^n]_{\bcX},W^n)
\right] &=\sum_{i=1}^n\frac{1}{n}\Pr\left([X_i]_{\bcX}\neq W_i\right)\\
&\leq \frac{\lvert\cY\rvert}{q}\varepsilon_n.
\end{align}
By the Markov inequality, for any $\beta>0$, we have
\begin{align}
 \Pr\left(\frac{1}{n}d_H([X^n]_{\bcX},W^n)\geq\beta\right)
&\leq \frac{\lvert\cY\rvert}{q\beta}\varepsilon_n.
\end{align}
Thus, by Lemma \ref{lemma:boosting} in Appendix \ref{appendix:misc}, there exists a code $(\kappa_n,\tau_n)$ of size $2^{n\delta}$ such that
\begin{align}
 \Pr\left(\tau_n(\kappa_n([X^n]_{\bcX}),W^n)\neq [X^n]_{\bcX}\right)
&\leq \frac{\lvert\cY\rvert}{q\beta}\varepsilon_n+\nu_n(\beta)2^{-n\delta}.
\end{align}
Since $[X^n]_{\bcX}$ is a function of $X^n$ and the total code size of $(\varphi_n,\tilde{\psi}_n)$ and $(\kappa_n,\tau_n)$ is $\lvert\cM_n\rvert 2^{n\delta}$, by 
Lemma \ref{lemma:converse-function-X} in Appendix \ref{appendix:misc}, we have
\begin{align}
 \Pr\left(
\frac{1}{n}\log\frac{1}{P_{[X^n]_{\bcX}|Y^n}([X^n]_{\bcX}|Y^n)}
>\frac{1}{n}\log\lvert\cM_n\rvert +2\delta
\right)
&\leq \frac{\lvert\cY\rvert}{q\beta}\varepsilon_n+(\nu_n(\beta)+1)2^{-n\delta}.
\end{align}
Thus, by the standard argument on the Slepian-Wolf coding (see, e.g.~\cite{Han-spectrum}), there exists a code for $([X^n]_{\bcX},Y^n)$ with rate $(1/n)\log\lvert\cM_n\rvert+3\delta$ such that the error probability is less than
\begin{align}
 \frac{\lvert\cY\rvert}{q\beta}\varepsilon_n+(\nu_n(\beta)+2)2^{-n\delta}.
\end{align}
By taking $\delta>0$ appropriately compared to $\beta>0$, the error probability converges to $0$, which implies
\begin{align}
R_{\SW}([\bX]_{\bcX}|\bY)\leq   R(\bX|\bY|\bmf)+3\delta.
\end{align}
Since $\beta>0$ can be arbitrarily small, and we can make $\delta>0$ arbitrarily small accordingly, we have
\eqref{eq_converse:proof:thm:full_support}.
\end{IEEEproof}

\begin{IEEEproof}
[Direct part]
First we claim that, given $[\bx]_{\bcX}$ and $P_{\bx}\in\cP_n(\cX)$ of a sequence $\bx\in\cX^n$, 
we can construct a sequence $\hbx\in\cX^n$ satisfying $[\hbx]_{\bcX}=[\bx]_{\bcX}$ and $\hbx=\sigma(\bx)$ for a permutation $\sigma$.
Indeed, we can construct $\hbx=\hbx([\bx]_{\bcX},P_{\bx})$ as follows.
From $[\bx]_{\bcX}$, we can determine a partition $\{\cI_\cC:\cC\in\bcX\}$ of $[1:n]$ as
\begin{align}
\cI_\cC:=\{i\in[1:n]:[x_i]_{\bcX}=\cC\},\quad\cC\in\bcX.
\end{align}
Then, given $P_{\bx}$, we can divide each $\cI_\cC$ ($\cC\in\bcX$) into a partition $\{\cI_a:a\in\cC\}$ so that\footnote{Although there are several partitions which satisfy \eqref{eq2_direct:proof:thm:full_support}, the choice of a partition does not affect the argument; we may choose a partition so that, for $a,\hat{a}\in\cC$ satisfying $a<\hat{a}$, if $i\in\cI_a$ and $j\in\cI_{\hat{a}}$ then $i<j$.}
\begin{align}
 \lvert\cI_a\rvert =nP_{\bx}(a),\quad a\in\cC\subseteq\cX.
\label{eq2_direct:proof:thm:full_support}
\end{align}
Note that $\{\cI_a:a\in\cX\}$ is also a partition of $[1:n]$; i.e., for each $i\in[1:n]$ there exists only one $\hx_i\in\cX$ such that $i\in\cI_{\hx_i}$. Then, it is not hard to see that $\hbx=(\hx_1,\dots,\hx_n)$ satisfies the desired property.

Now, suppose that we are given a Slepian-Wolf code $(\hat{\varphi}_n,\hat{\psi}_n)$ for sending $[X^n]_{\bcX}$ with error probability $\varepsilon_n$.
From this code, we can construct a code for computing $f_n$ as follows. In the new code, observing $X^n=\bx$, the encoder sends 
the marginal type $P_{\bx}$ of $\bx$ by using $\lvert\cX\rvert\log(n+1)$ bits in addition to 
the codeword $\hat{\varphi}_n([\bx]_{\bcX})$ of the original SW code.
Assume that the decoder can 
obtain $[\bx]_{\bcX}$ from $\hat{\varphi}_n([\bx]_{\bcX})$ and $\by$. Then, since $P_{\bx}$ is sent from the encoder, the decoder can construct a sequence $\hbx=\hbx([\bx]_{\bcX},P_{\bx})$ satisfying $[\hbx]_{\bcX}=[\bx]_{\bcX}$ and $\hbx=\sigma(\bx)$ for a permutation $\sigma$ as shown above.
Note that $\hbx$ satisfies $f_n(\hbx,\by)=f_n(\bx,\by)$, since $f_n$ is $\bcX$-informative (and thus the condition (\ref{c2:def:informative}) of Definition \ref{def:informative} holds).
This proves that the decoder can compute $f_n(\bx,\by)$ with error probability $\varepsilon_n$, and thus we have \eqref{eq_direct:proof:thm:full_support}.
\end{IEEEproof}

\section{Results for Restricted Supports}\label{sec:non_full_support}

In this section, we consider i.i.d. source $(\bm{X}, \bm{Y}) = \{(X^n,Y^n) \}_{n=1}^\infty$ distributed according 
to $P_{XY}$, where $P_{XY}$  may not be full support and the support set is denoted by ${\cal S}\subseteq {\cal X} \times {\cal Y}$. 
When $P_{XY}$ is not full support, $(\bm{X}, \bm{Y})$ is not smooth anymore. However, for the type of symbol-wise functions,
we can derive an explicit formula for $R(\bm{X}|\bm{Y}|\bm{f}^{\mathsf{t}})$ by modifying the idea explained in Section \ref{sec:idea}.

\subsection{Hypergraph Graph Entropy}

In this section, we introduce the hypergraph entropy, which is a natural extension of the
graph entropy introduced in \cite{Korner73} (see also \cite{OrlitskyRoche01} and \cite{KornerMarton88}).\footnote{More 
precisely, the quantity defined by \eqref{eq:definition-hypergraph-entropy} is called ``hyperclub entropy" in \cite{KornerMarton88} (see also \cite{CsiKorLovMarSim90}) and 
the terminology ``hypergraph entropy" is used for a different quantity in \cite{KornerMarton88}; since ``hyperclub"
is not very common terminology, we call the quantity defined by \eqref{eq:definition-hypergraph-entropy}, ``hypergraph entropy" in this paper.} 
A hypergraph ${\cal G} = ({\cal X}, {\cal E})$ consists of vertex set ${\cal X}$ 
and hyperedge set ${\cal E} \subseteq 2^{{\cal X}}$. For the purpose of this paper, we identify
the vertex set with the alphabet of the encoder's observation. 

Let $X$ be a random variable on ${\cal X}$. Without loss of generality, we assume $\mathsf{supp}(P_X) = {\cal X}$.
The hypergraph entropy of $(X,{\cal G})$ is defined by
\begin{align} \label{eq:definition-hypergraph-entropy}
H_{\cal G}(X) &:= \min_{X \in W \in {\cal E}} I(W \wedge X) \\
&= \min\bigg\{ I(W \wedge X) : W \mbox{ takes values in } {\cal E} \mbox{ and } \Pr( X \in W) = 1 \bigg\}.
\end{align}
More precisely, the minimization is taken over test channel $P_{W|X}$ satisfying
$\sum_{w \ni x} P_{W|X}(w|x) = 1$. By the data processing inequality, the minimization can be restricted to
$W$s ranging over maximal hyperedge of ${\cal E}$.

For a given (standard) graph ${\cal G}^\prime = ({\cal X}, {\cal E}^\prime)$ with edge set ${\cal E}^\prime \subseteq {\cal X} \times {\cal X}$, 
a set of vertices is independent if no two are connected by any edge in ${\cal E}^\prime$. If we choose the hyperedge set
${\cal E}$ as the set of all independent set of ${\cal G}^\prime$, then the hypergraph entropy $H_{\cal G}(X)$
is nothing but the graph entropy of $(X,{\cal G}^\prime)$ in the sense of \cite{Korner73}. 

\begin{example}[\cite{OrlitskyRoche01}] \label{example:OrlRoc-unconditional}
Let ${\cal X} = \{0,1,2\}$, ${\cal E} = \{ \{0,1\}, \{1,2\} \}$, and $P_X$ be the uniform 
distribution on ${\cal X}$.  By convexity of mutual information,
$I(W \wedge X)$ is minimized when $P_{W|X}(\{0,1\} | 1) = P_{W|X}(\{1,2\} | 1) = \frac{1}{2}$. Thus, we have
\begin{align}
H_{\cal G}(X) = H(W) - H(W|X) = 1 - \frac{1}{3} = \frac{2}{3}.
\end{align}
\end{example}

\begin{example} \label{example:modified-OrlRoc-unconditional}
Let ${\cal X} = \{0,1,2\}$, ${\cal E} = \{ \{0,1\}, \{1,2\}, \{0,2\} \}$, and $P_X$ be the uniform 
distribution on ${\cal X}$. By convexity of mutual information and symmetry,\footnote{In fact, by rotating the labels
$0,1,2$ and by using convexity, we can first show that $P_{W|X}(\{0,1\}|0)=P_{W|X}(\{1,2\}|1) = P_{W|X}(\{0,2\}|2) = \alpha$
and $P_{W|X}(\{0,2\}|0) = P_{W|X}(\{0,1\}|1) = P_{W|X}(\{1,2\}|2) = \beta$ for some
$\alpha,\beta$ with $\alpha+\beta=1$ is optimal. Then, by flipping $P_{W|X}(w|x), P_{W|X}(w^\prime|x)$ for
each $w,w^\prime \ni x$ and by using convexity, we can show that $\alpha=\beta$ is optimal.}
$I(W \wedge X)$ is minimized when $P_{W|X}(w|x) = \frac{1}{2}$ for every $w \ni x$. Thus, we have
\begin{align}
H_{\cal G}(X) = H(W) - H(W|X) = \log 3 - 1.
\end{align}
\end{example}

Next, let us extend the above definition to the conditional hypergraph entropy.
Let $(X,Y)$ be a pair of random variables on ${\cal X} \times {\cal Y}$. The hypergraph
entropy of $(X,{\cal G})$ given $Y$ is defined by 
\begin{align}
H_{\cal G}(X|Y) := \min_{W \markov X \markov Y \atop X \in W \in {\cal E}} I(W \wedge X|Y),
\end{align}
where $W \markov X \markov Y$ indicates that $W, X, Y$ form a Markov chain.

\begin{example}[\cite{OrlitskyRoche01}] 
For the same hypergraph as Example \ref{example:OrlRoc-unconditional}, ${\cal Y} = {\cal X}$, 
${\cal S} = \{ (x,y) : x,y \in \{0,1,2\},  x \neq y \}$, and $P_{XY}(x,y) = \frac{1}{6}$ for $(x,y) \in {\cal S}$, by the convexity of the conditional mutual information, 
$I(W \wedge X|Y)$ is minimized  when $P_{W|X}(\{0,1\} | 1) = P_{W|X}(\{1,2\} | 1) = \frac{1}{2}$. Thus, we have
\begin{align}
H_{\cal G}(X|Y) &= H(W|Y) - H(W|X,Y) \\
&= \frac{1}{3} + \frac{2}{3} h\left( \frac{1}{4} \right) - \frac{1}{3} \\
&= \frac{2}{3} h\left( \frac{1}{4} \right).
\end{align}
\end{example}

\begin{example} \label{example:conditional-hypergraph-entropy}
For the same hypergraph as Example \ref{example:modified-OrlRoc-unconditional}, ${\cal Y} = {\cal X}$, 
${\cal S} = \{ (x,y) : x,y \in \{0,1,2\},  x \neq y \}$, and $P_{XY}(x,y) = \frac{1}{6}$ for $(x,y) \in {\cal S}$, by the convexity of the conditional mutual information
and symmetry, 
$I(W \wedge X|Y)$ is minimized  when $P_{W|X}(w|x) = \frac{1}{2}$ for every $w \ni x$. Thus, we have
\begin{align}
H_{\cal G}(X|Y) &= H(W|Y) - H(W|X,Y) \\
&= \frac{3}{2} - 1 \\
&= \frac{1}{2}.
\end{align}
\end{example}

\subsection{Compatible Hyperedge and Solvable Hyperedge}

In this section, we introduce concepts of {\em compatible hyperedge} and {\em solvable hyperedge}.
These concepts play important roles in the statement as well as the proof of Theorem \ref{theorem:non-full-support} in latter sections.
To facilitate understanding of the concepts, we will provide some examples to support definitions.
All lemmas given in this section will be proved in Appendix \ref{appendix:proof_lemmas}.
To simplify the notation, let us introduce
\begin{align}
 \cQ_n := \cP_n(\cV).
\end{align}

When we considered smooth sources in the previous section, the list $(f_n^\mathsf{t}(\bm{x}, b \bm{y}^{(-i)}) : b \in {\cal Y}) \in {\cal Q}_n^{|{\cal Y}|}$ played an important role.
On the other hand, since the source we consider in this section does not have full support, the component function $f(x,y)$ is undefined 
on the complement ${\cal S}^c$
of the support set ${\cal S}$.\footnote{Even if it is defined, there is no guarantee that a given code recovers the correct value of $f_n^\mathsf{t}(\bm{x},\bm{y})$
when $(x_i,y_i) \notin {\cal S}$ for some $i \in [1:n]$.} Thus, we need to consider the set of possible lists for given $(\bm{x},\bm{y})$ and $i \in [1:n]$
by assuming values on ${\cal S}^c$ are appended arbitrarily, which is defined as follows. 

\begin{definition}
For a given $(\bm{x},\bm{y}) \in {\cal S}^n$ and for each $i \in [1:n]$, let ${\cal Q}_n^{(i)}({\cal S},\bm{x},\bm{y})$ be the set of all 
$(Q_0,\ldots,Q_{|{\cal Y}|-1}) \in {\cal Q}_n^{|{\cal Y}|}$ satisfying 
\begin{align} \label{eq:correct-sequence}
Q_b = f_n^{\mathsf{t}}(\bm{x}, b \bm{y}^{(-i)}),~~~\forall b \mbox{ s.t. } (x_i, b) \in {\cal S}.
\end{align}
\end{definition}

In the converse proof of Theorem \ref{theorem:non-full-support}, for a given list $(Q_0,\ldots,Q_{|{\cal Y}|-1}) \in {\cal Q}_n^{(i)}({\cal S},\bm{x},\bm{y})$,
we need to infer possible values of $x_i$. The following definition provides the set of possible candidates, and Lemma \ref{lemma:compatible}
guarantees that $x_i$ is included in the candidate set.\footnote{It may be worth to note that $nQ_{b}(v)$ changes by at most 1 if $b$ is changed, since $Q_b$ is the type of a symbol-wise function.}
 
\begin{definition}[Compatible Hyperedge]
For a given $(Q_0,\ldots,Q_{|{\cal Y}|-1}) \in {\cal Q}_n^{|{\cal Y}|}$, we say that $a \in {\cal X}$ is {\em compatible} with
$(Q_0,\ldots,Q_{|{\cal Y}|-1})$ if, for every $b_1,b_2 \in {\cal Y}$ with $(a,b_1) \in {\cal S}$ and $(a,b_2) \in {\cal S}$,
\begin{align}
n Q_{b_1}(v) - n Q_{b_2}(v) = \mathbf{1}[ f(a,b_1) = v] - \mathbf{1}[ f(a,b_2) = v],~~~\forall v \in {\cal V}.
\end{align}
Then, let
\begin{align}
e(Q_0,\ldots,Q_{|{\cal Y}|-1}) := \{ a : a \mbox{ is compatible with } (Q_0,\ldots,Q_{|{\cal Y}|-1}) \}
\end{align}
be the hyperedge that is compatible with $(Q_0,\ldots,Q_{|{\cal Y}|-1})$.\footnote{The set $e(Q_0,\ldots,Q_{|{\cal Y}|-1})$
can be empty set; even if it is empty, we call it a hyperedge though the empty set is commonly not regarded as a hyperedge.}
\end{definition}

\begin{lemma}
\label{lemma:compatible}
If $(Q_0,\ldots,Q_{|{\cal Y}|-1}) \in {\cal Q}_n^{(i)}({\cal S},\bm{x},\bm{y})$, then $x_i \in e(Q_0,\ldots,Q_{|{\cal Y}|-1})$.
\end{lemma}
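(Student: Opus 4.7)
The proof plan is essentially an unpacking of the definitions, so it should be direct. First, I would fix indices $b_1, b_2 \in \mathcal{Y}$ with $(x_i, b_1) \in \mathcal{S}$ and $(x_i, b_2) \in \mathcal{S}$. By the defining property of $\mathcal{Q}_n^{(i)}(\mathcal{S}, \mathbf{x}, \mathbf{y})$ in \eqref{eq:correct-sequence}, this means $Q_{b_1} = f_n^{\mathsf{t}}(\mathbf{x}, b_1 \mathbf{y}^{(-i)})$ and $Q_{b_2} = f_n^{\mathsf{t}}(\mathbf{x}, b_2 \mathbf{y}^{(-i)})$, i.e., $Q_{b_1}$ and $Q_{b_2}$ are literal types of length-$n$ sequences produced by the symbol-wise function $f_n$.

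Next, I would carefully count $nQ_{b}(v)$ for each such $b$. Since $b \mathbf{y}^{(-i)}$ differs from $\mathbf{y}$ only at coordinate $i$, the sequence $f_n(\mathbf{x}, b \mathbf{y}^{(-i)})$ agrees with $f_n(\mathbf{x}, \mathbf{y})$ at every coordinate $j \neq i$, and at coordinate $i$ equals $f(x_i, b)$. Introducing
\begin{align}
N_v := \bigl|\{ j \in [1:n] \setminus \{i\} : f(x_j, y_j) = v \}\bigr|,
\end{align}
this yields $n Q_{b}(v) = N_v + \mathbf{1}[f(x_i, b) = v]$ for every $v \in \mathcal{V}$ and every $b$ with $(x_i, b) \in \mathcal{S}$.

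Subtracting these two expressions, the $N_v$ terms cancel and I obtain
\begin{align}
n Q_{b_1}(v) - n Q_{b_2}(v) = \mathbf{1}[f(x_i, b_1) = v] - \mathbf{1}[f(x_i, b_2) = v], \quad \forall v \in \mathcal{V},
\end{align}
which is precisely the compatibility condition with $a = x_i$. Hence $x_i \in e(Q_0, \ldots, Q_{|\mathcal{Y}|-1})$.

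There is no real obstacle here; the whole argument is a bookkeeping exercise showing that the definition of compatibility was tailored exactly so as to be satisfied by the ``true'' symbol $x_i$. The only subtlety worth highlighting is that the condition is only imposed on pairs $(b_1, b_2)$ for which both $(x_i, b_1)$ and $(x_i, b_2)$ lie in $\mathcal{S}$, which is consistent with the restriction on $b$ in \eqref{eq:correct-sequence}; for pairs involving a $b$ with $(x_i, b) \notin \mathcal{S}$ the entry $Q_b$ is unconstrained and so is harmlessly irrelevant to the compatibility check.
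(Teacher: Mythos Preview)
Your proof is correct and follows essentially the same approach as the paper's own proof, which also directly verifies the compatibility condition for $a=x_i$ using \eqref{eq:correct-sequence}. Your version is simply more explicit, introducing the auxiliary count $N_v$ to make the cancellation transparent, whereas the paper states the resulting identity in one line.
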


\begin{example}
Let ${\cal Y} = {\cal X} = \{0,1,2\}$, ${\cal S} = \{ (x,y) : x,y \in \{0,1,2\},  x \neq y \}$, and 
\begin{align} \label{eq:function-card-problem}
f(x,y) = \left\{
\begin{array}{ll}
0 & \mbox{if } x > y \\
1 & \mbox{if } x < y
\end{array}
\right..
\end{align}
For $n = 6$, let 
\begin{align}
\bm{x} &= (0~1~2~1~2~0), \\
\bm{y} &= (1~0~0~2~1~2),
\end{align}
which means
\begin{align}
f_n(\bm{x},\bm{y}) &= (1~0~0~1~0~1), \\
f_n^\mathsf{t}(\bm{x},\bm{y}) &= \left( \frac{1}{2}, \frac{1}{2} \right).
\end{align}
For $i=4$, since $(x_i,1) \notin {\cal S}$, $Q_1$ can be arbitrary. Thus,
\begin{align}
{\cal Q}_n^{(4)}({\cal S}, \bm{x},\bm{y}) = \left\{ \left( \left( \frac{2}{3}, \frac{1}{3} \right), Q_1, \left( \frac{1}{2}, \frac{1}{2} \right) \right) : Q_1 \in {\cal Q}_n \right\}.
\end{align}
For instance, when 
\begin{align}
(Q_0,Q_1,Q_2) = \left( \left( \frac{2}{3}, \frac{1}{3}  \right), \left( \frac{1}{2}, \frac{1}{2} \right), \left( \frac{1}{2}, \frac{1}{2}  \right) \right),
\end{align}
then $e(Q_0,Q_1,Q_2) = \{0,1\}$; when 
\begin{align}
(Q_0,Q_1,Q_2) = \left( \left( \frac{2}{3}, \frac{1}{3}  \right), \left( \frac{2}{3}, \frac{1}{3} \right), \left( \frac{1}{2}, \frac{1}{2}  \right) \right),
\end{align}
then $e(Q_0,Q_1,Q_2) = \{1,2\}$; when
\begin{align}
(Q_0,Q_1,Q_2) = \left( \left( \frac{2}{3}, \frac{1}{3}  \right), \left( \frac{1}{3}, \frac{2}{3} \right), \left( \frac{1}{2}, \frac{1}{2}  \right) \right),
\end{align}
then $e(Q_0,Q_1,Q_2) = \{1\}$.
\end{example}

Lemma \ref{lemma:compatible} guarantees that 
the hyperedge $e(Q_0,\ldots,Q_{|{\cal Y}|-1})$ which is compatible with 
$(Q_0,\ldots,Q_{|{\cal Y}|-1}) \in {\cal Q}_n^{(i)}({\cal S},\bm{x},\bm{y})$
includes $x_i$.
In the proof of Theorem \ref{theorem:non-full-support}, 
a hyperedge including $x_i$ 
plays a similar role as
a subset $[x_i]_{\bcX}$
of a partition $\bcX$ including $x_i$ in the case of smooth sources.
Particularly, 
in the achievability proof,
we compute $f_n^\mathsf{t}(\bm{x},\bm{y})$ from (i) a partial information on $x_i$ such that it is in a hyperedge and (ii) additional 
information on the (conditional) marginal type of $\bm{x}$.
To guarantee that the decoder can compute the function value, we need a technical condition on hyperedges.
Before the precise description of the condition, we give an example which shows a key idea behind the definition.

\begin{example}
\label{example:idea_solvability}
Let us first consider a source such that $\cX=\{0,1\}$, $\cY=\{0,1,2\}$, and the support set 
is $\cS=\{(0,1),(0,2),(1,0),(1,2)\}$; cf.~Table \ref{table:example:idea_solvability-1}, where $*$ indicates $(x,y) \notin {\cal S}$.
Using the terminology that will be introduced in Definition \ref{def:solvability}, this is the case such that there is no ``simple loop".
In this case, the joint type $P_{\bm{x}\bm{y}}$ can be determined from marginal types $P_{\bm{x}}$ and $P_{\bm{y}}$ for any $(\bx,\by)\in\cS^n$. 
Indeed, we have 
$nP_{\bx\by}(0,1) = nP_{\by}(1)$, 
$nP_{\bx\by}(0,2) = n(P_{\bx}(0)-P_{\by}(1))$, 
$nP_{\bx\by}(1,0) = nP_{\by}(0)$, and
$nP_{\bx\by}(1,2) = n(P_{\bx}(1)-P_{\by}(0))$.
Thus, for any function $f$ on $\cX\times\cY$, the value $f_n^\mathsf{t}(\bm{x},\bm{y})$ can be computed from $P_{\bm{x}}$ and $P_{\bm{y}}$.
In this sense, this case is ``solvable".
\begin{table}[tb]
\centering{
\caption{The Case without simple loop in Example \ref{example:idea_solvability}} \label{table:example:idea_solvability-1}
\begin{tabular}{c|ccc}
 $x\setminus y$ & $0$ & $1$ & $2$  \\\hline
 $0$ & * &  &  \\
 $1$ &  & * &  \\
\end{tabular}
}
\end{table}

On the other hand, let us consider the case where the support $\cS$ and the function $f$ are given as 
Table \ref{table:example:idea_solvability-2}.
Using the terminology that will be introduced in Definition \ref{def:solvability}, this is the case 
such that there exists a ``balanced simple loop"; see also Example \ref{example:simple-loops}.
In this case, for any $(\bx,\by)\in\cS^n$, letting $a=nP_{\bx\by}(0,0)$, we have 
\begin{align}
 nP_{\bx\by}(0,2) &=nP_{\bx}(0)-a,\label{eq1:example:idea_solvability}\\
 nP_{\bx\by}(1,0) &=nP_{\by}(0)-a,\\
 nP_{\bx\by}(1,1) &=nP_{\bx}(1)-nP_{\by}(0)+a,\label{eq2:example:idea_solvability}\\
 nP_{\bx\by}(2,1) &=nP_{\by}(1)-nP_{\bx}(1)+nP_{\by}(0)-a,\\
 nP_{\bx\by}(2,2) &=nP_{\by}(2)-nP_{\bx}(0)+a.
\end{align}
Since the value $a$ is unknown, which stems from the fact that there is a simple loop,
$P_{\bx\by}$ cannot be determined from $P_{\bx}$ and $P_{\by}$. Nevertheless, since the simple loop 
is balanced in the sense that $+a$ and $-a$ cancel for each function value, 
we can compute the type $P_{f_n(\bx,\by)}$ of the function values from $P_{\bx}$ and $P_{\by}$. 
Indeed, from \eqref{eq1:example:idea_solvability} and \eqref{eq2:example:idea_solvability}, we have $nP_{f_n(\bx,\by)}(1)=nP_{\bx}(0)+nP_{\bx}(1)-nP_{\by}(0)$, which does not depend on unknown value $a$.
We can also determine $nP_{f_n(\bx,\by)}(2)$ and $nP_{f_n(\bx,\by)}(3)$ similarly.
In this sense, this case is also ``solvable".

\begin{table}[tb]
\centering{
\caption{The case with balanced simple loop in Example \ref{example:idea_solvability}} \label{table:example:idea_solvability-2}
\begin{tabular}{c|ccc}
 $x\setminus y$ & $0$ & $1$ & $2$  \\\hline
 $0$ & 2 & * & 1 \\
 $1$ & 3 & 1 & * \\
 $2$ & * & 2 & 3
\end{tabular}
}
\end{table}
\end{example}

Motivated by the idea given in Example \ref{example:idea_solvability}, we introduce the concept of solvability of hyperedges.
The solvable hyperedge gives a sufficient condition such that 
the type of function values can be determined from 
marginal types, which is guaranteed by
Lemma \ref{lemma:marginal_condition} below.

\begin{definition}[Solvable Hyperedge]
\label{def:solvability}
For a given ${\cal A} \times {\cal B} \subseteq {\cal X} \times {\cal Y}$, a subset of the form:
\begin{align}
\{ (a_0,b_0),(a_0,b_1),(a_1,b_1),(a_1,b_2),\ldots,(a_{m-2},b_{m-1}), (a_{m-1},b_{m-1}), (a_{m-1},b_0) \} \subseteq ({\cal A} \times {\cal B}) \cap {\cal S}
\end{align}
with $a_i \neq a_j$ and $b_i \neq b_j$ for $i \neq j$, is called a {\em simple loop}.
For a given simple loop and each $v \in {\cal V}$, let 
\begin{align}
{\cal I}_+(v) &:= \{ 0 \le i \le m-1 : f(a_i,b_i) = v \}, \\
{\cal I}_-(v) &:= \{ 0 \le i \le m-1 : f(a_i, b_{i+1~\mathrm{mod}~m}) = v\}
\end{align}
be the set of incremental positions and the set decremental positions in the simple loop, respectively.  
Then, we say that ${\cal A} \times {\cal B}$ is {\em solvable} for $({\cal S}, f)$ if, for any simple loop of ${\cal A}\times {\cal B}$, 
the {\em balanced} condition
\begin{align}
|{\cal I}_+(v)| = |{\cal I}_-(v)|,~~~~\forall v \in {\cal V}
 \label{eq:balanced-condition}
\end{align}
holds.\footnote{When either $|{\cal A}| \le 1$ or $|{\cal B}| \le1$, then ${\cal A} \times {\cal B}$ is trivially solvable since there is no simple loop.} 
We say that $e \subseteq {\cal X}$ is solvable hyperedge if $e \times {\cal Y}$ is solvable for $({\cal S}, f)$.
The set of all maximal solvable hyperedges for $({\cal S}, f)$ is denoted by ${\cal E}({\cal S},f)$.
\end{definition}

\begin{remark}
When $f$ is the identity function, we can verify that ${\cal A} \times {\cal B}$ is solvable for $({\cal S},f)$ if and only if
it does not contain any simple loop.
\end{remark}

\begin{example} \label{example:simple-loops}
Let us consider function $f:{\cal X}\times {\cal Y} \to {\cal V}=\{0,\ldots,4\}$ shown in 
Table \ref{table:example-simple-loop}.
There are two simple loops for this function:
\begin{align}
& \{ (0,1), (0,3), (2,3), (2,2), (1,2), (1,1) \}, \\ 
& \{ (0,0), (0,4), (3,4), (3,0) \},
\end{align}
which are described in Tables \ref{table:example-simple-loop2} and \ref{table:example-simple-loop3}
with subscripts $\pm$, where $+$ and $-$ indicate incremental and decremental positions, respectively. 
As we can find from the tables, the balanced condition \eqref{eq:balanced-condition} is satisfied for both the simple loops. 
Thus, $\{ 0,1,2,3\} \times \{0,1,2,3,4\}$ is solvable in this case.
\begin{table}[tb]
\begin{minipage}{.3\textwidth}
\centering{
\caption{Function of Example \ref{example:simple-loops}}\label{table:example-simple-loop}
\begin{tabular}{c|ccccc}
 $x\setminus y$& $0$ & $1$ & $2$ & $3$ & $4$ \\\hline
 $0$ & $4$ & $2$ & * & $1$ & $0$ \\
 $1$ & * & $3$ & $1$ & * & * \\
 $2$ & * & * & $2$ & $3$ & * \\
 $3$ & $4$ & * & * & * & $0$ \\
\end{tabular}
}
\end{minipage}
~
\begin{minipage}{.3\textwidth}
\centering{
\caption{Simple Loop 1}\label{table:example-simple-loop2}
\begin{tabular}{c|ccccc}
 $x\setminus y$& $0$ & $1$ & $2$ & $3$ & $4$ \\\hline
 $0$ & $4$ & $2_+$ & * & $1_-$ & $0$ \\
 $1$ & * & $3_-$ & $1_+$ & * & * \\
 $2$ & * & * & $2_-$ & $3_+$ & * \\
 $3$ & $4$ & * & * & * & $0$ \\
\end{tabular}
}
\end{minipage}
~
\begin{minipage}{.3\textwidth}
\centering{
\caption{Simple Loop 2}\label{table:example-simple-loop3}
\begin{tabular}{c|ccccc}
 $x\setminus y$& $0$ & $1$ & $2$ & $3$ & $4$ \\\hline
 $0$ & $4_+$ & $2$ & * & $1$ & $0_-$ \\
 $1$ & * & $3$ & $1$ & * & * \\
 $2$ & * & * & $2$ & $3$ & * \\
 $3$ & $4_-$ & * & * & * & $0_+$ \\
\end{tabular}
}
\end{minipage}
\end{table}
\end{example}

\begin{example} \label{example-solvable-card-game}
Let us consider the function given by \eqref{eq:function-card-problem}; see also
Table \ref{table:example-card-game}. In this case, $\{0,1\}$, $\{0,2\}$, $\{1,2\}$ are solvable 
hyperedges since there is no simple loop. However, $\{0,1,2\}$ is not solvable hyperedge since
the simple loop described in Table \ref{table:example-card-game} violates \eqref{eq:balanced-condition}.
\begin{table}[tb]
\centering{
\caption{Function table of \eqref{eq:function-card-problem}} \label{table:example-card-game}
\begin{tabular}{c|ccc}
 $x\setminus y$ & $0$ & $1$ & $2$  \\\hline
 $0$ & * & $1_+$ & $1_-$ \\
 $1$ & $0_-$ & * & $1_+$ \\
 $2$ & $0_+$ & $0_-$ & *  \\
\end{tabular}
}
\end{table}
\end{example}

\begin{lemma}
\label{lemma:marginal_condition}
Suppose that ${\cal A} \times {\cal B} \subseteq {\cal X} \times {\cal Y}$ is solvable for $({\cal S},f)$. 
Then, for any integer $n$, $\bm{x} \in {\cal A}^n$ and $\bm{y} \in {\cal B}^n$ satisfying $(x_i,y_i) \in {\cal S}$  for all $1 \le i \le n$,
the type $f_n^{\mathsf{t}}(\bm{x},\bm{y})$ of $f_n(\bm{x},\bm{y}) = (f(x_1,y_1),\ldots,f(x_n,y_n))$ can be uniquely determined from
the marginal types $P_{\bm{x}}$ and $P_{\bm{y}}$; more precisely, any joint types $P_{\bar{X}\bar{Y}}^{(1)}, P_{\bar{X}\bar{Y}}^{(2)} \in {\cal P}_n({\cal A} \times {\cal B})$
with $\mathsf{supp}(P_{\bar{X}\bar{Y}}^{(i)}) \subseteq {\cal S}$, $i=1,2$ satisfying 
\begin{align}
\sum_{y \in {\cal B}} P_{\bar{X}\bar{Y}}^{(i)}(x,y) &= P_{\bm{x}}(x),~~~\forall x \in {\cal A},\label{eq:marginal-condition-x} \\
\sum_{x \in {\cal A}} P_{\bar{X}\bar{Y}}^{(i)}(x,y) &= P_{\bm{y}}(y),~~~\forall y \in {\cal B} \label{eq:marginal-condition-y}
\end{align}
must satisfy 
\begin{align}
\sum_{(x,y) \in {\cal A}\times {\cal B}: \atop f(x,y)=v} P_{\bar{X}\bar{Y}}^{(1)}(x,y) = \sum_{(x,y) \in {\cal A}\times {\cal B}: \atop f(x,y) = v} P_{\bar{X}\bar{Y}}^{(2)}(x,y),~~~\forall v \in {\cal V}.
 \label{eq:unique-solution-condition}
\end{align}
\end{lemma}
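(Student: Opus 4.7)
The plan is to recast the conclusion as a statement about an integer-valued function on the bipartite graph with vertex classes $\mathcal{A}$ and $\mathcal{B}$ and edge set $(\mathcal{A}\times\mathcal{B})\cap\mathcal{S}$. Specifically, I would define the difference
\begin{align}
D(x,y) := n\bigl(P_{\bar{X}\bar{Y}}^{(1)}(x,y) - P_{\bar{X}\bar{Y}}^{(2)}(x,y)\bigr),\qquad (x,y)\in \mathcal{A}\times\mathcal{B},
\end{align}
which vanishes outside $\mathcal{S}$ since both types are supported on $\mathcal{S}$, and takes integer values since both types live in $\mathcal{P}_n$. The marginal conditions \eqref{eq:marginal-condition-x} and \eqref{eq:marginal-condition-y} translate into $\sum_{y\in\mathcal{B}} D(x,y)=0$ for every $x\in\mathcal{A}$ and $\sum_{x\in\mathcal{A}} D(x,y)=0$ for every $y\in\mathcal{B}$, so $D$ is a ``signed circulation'' on the bipartite graph. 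The target identity \eqref{eq:unique-solution-condition} then becomes $\sum_{(x,y):f(x,y)=v} D(x,y)=0$ for each $v\in\mathcal{V}$.

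First, I would decompose $D$ as an integer combination of signed simple loops. As long as $D$ is not identically zero, one can build an alternating-sign walk in the bipartite graph: start at an edge $(a_0,b_0)$ with $D(a_0,b_0)>0$; by zero row-sum at $a_0$ there is $b_1\neq b_0$ with $D(a_0,b_1)<0$; by zero column-sum at $b_1$ there is $a_1\neq a_0$ with $D(a_1,b_1)>0$; and so on. Since $\mathcal{A}$ and $\mathcal{B}$ are finite, this walk must eventually revisit a vertex, and the first revisit produces a closed sub-walk along which $D$ alternates in sign. Setting $\sigma$ equal to the minimum of $|D|$ along this sub-walk's edges and subtracting $\sigma$ times its $\pm 1$-valued indicator from $D$ preserves the zero-marginal property and strictly decreases $\sum_{x,y}|D(x,y)|$. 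Iterating yields a representation $D=\sum_k \sigma_k L_k$, where each $\sigma_k>0$ and each $L_k\in\{-1,0,+1\}^{\mathcal{A}\times\mathcal{B}}$ is the signed indicator of a simple loop, with the $+1$ entries on the incremental positions and $-1$ entries on the decremental positions.

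Next, I would invoke solvability. Since $\mathcal{A}\times\mathcal{B}$ is solvable, the balanced condition \eqref{eq:balanced-condition} holds for every simple loop, so for each $k$ and each $v\in\mathcal{V}$,
\begin{align}
\sum_{(x,y)} L_k(x,y)\,\mathbf{1}[f(x,y)=v] = |\mathcal{I}_+(v)| - |\mathcal{I}_-(v)| = 0.
\end{align}
Summing these identities with weights $\sigma_k$ produces $\sum_{(x,y):f(x,y)=v} D(x,y)=0$ for every $v$, which is precisely \eqref{eq:unique-solution-condition}.

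The delicate step is extracting a bona fide simple loop in the sense of Definition \ref{def:solvability}, with pairwise distinct $a_i$'s and pairwise distinct $b_j$'s, from the alternating-sign walk above. I would handle this by stopping at the first vertex repetition of the walk, which by the bipartite parity argument has an even number of intervening edges, and then possibly cyclically shifting the start so that the alternating $\pm$ pattern matches the incremental/decremental labels of the simple loop template; because the repetition was chosen to be the first one, the $a$- and $b$-indices on the extracted sub-cycle are automatically distinct.
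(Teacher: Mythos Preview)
Your proposal is correct and follows essentially the same route as the paper: form the integer difference $D=\delta$, use the zero row/column sums to build an alternating-sign walk, extract a simple loop, peel it off while observing that the balanced condition keeps every $f$-level set sum invariant, and iterate until $D\equiv 0$. The only cosmetic differences are that you subtract the minimum $\sigma$ along the loop at each step whereas the paper subtracts $1/n$, and that you extract the loop via the first vertex repetition rather than the paper's ``keep all $a_i,b_j$ distinct until forced back to $a_0$ or $b_0$'' phrasing; your loop-extraction is in fact slightly cleaner.
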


The following lemma gives a connection between compatible hyperedges and solvable hyperedges, which 
will be used in the converse proof of Theorem \ref{theorem:non-full-support}. 

\begin{lemma}
\label{lemma:connection-compatible-solvable}
For a given $(Q_0,\ldots, Q_{|{\cal Y}|-1}) \in {\cal Q}_n^{|{\cal Y}|}$, 
the compatible hyperedge $e(Q_0,\ldots,Q_{|{\cal Y}|-1})$ is solvable. Furthermore, there exists $\tilde{e} \in {\cal E}({\cal S},f)$
satisfying $e(Q_0,\ldots,Q_{|{\cal Y}|-1}) \subseteq \tilde{e}$.
\end{lemma}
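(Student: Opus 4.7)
The plan is to verify the solvability of $e := e(Q_0,\ldots,Q_{|{\cal Y}|-1})$ directly from the definition of compatibility, and then to extract a maximal solvable hyperedge via a purely combinatorial extension argument.

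For the first assertion, I would fix an arbitrary simple loop
$$\{(a_0,b_0),(a_0,b_1),(a_1,b_1),(a_1,b_2),\ldots,(a_{m-1},b_{m-1}),(a_{m-1},b_0)\} \subseteq (e \times {\cal Y}) \cap {\cal S},$$
and verify the balanced condition \eqref{eq:balanced-condition} from the compatibility relations. The key observation is that, for every $i \in \{0,\ldots,m-1\}$, both pairs $(a_i,b_i)$ and $(a_i,b_{i+1 \bmod m})$ lie in ${\cal S}$; since $a_i \in e$, the definition of a compatible hyperedge yields
$$nQ_{b_i}(v) - nQ_{b_{i+1 \bmod m}}(v) = \mathbf{1}[f(a_i,b_i)=v] - \mathbf{1}[f(a_i,b_{i+1 \bmod m})=v]$$
for every $v \in {\cal V}$.

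Summing this identity over $i = 0, 1, \ldots, m-1$, the left-hand side telescopes around the cycle $b_0 \to b_1 \to \cdots \to b_{m-1} \to b_0$ and collapses to zero, while the right-hand side becomes $|{\cal I}_+(v)| - |{\cal I}_-(v)|$. Hence $|{\cal I}_+(v)| = |{\cal I}_-(v)|$ for every $v \in {\cal V}$, which is precisely the balanced condition. Since the simple loop was arbitrary, $e \times {\cal Y}$ is solvable, i.e., $e$ is a solvable hyperedge.

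For the second assertion, I would exploit the finiteness of ${\cal X}$: if $e$ itself is already maximal, take $\tilde{e} = e$; otherwise, pick any $a \in {\cal X} \setminus e$ such that $e \cup \{a\}$ is still solvable and iterate. Because $|{\cal X}| < \infty$, this process terminates at some $\tilde{e} \in {\cal E}({\cal S},f)$ with $e \subseteq \tilde{e}$. The main conceptual step is the telescoping argument; the maximality statement is essentially bookkeeping. The only potential subtlety lies in handling degenerate cases (e.g.~when $e$ is empty, in which case the claim holds trivially since any maximal solvable hyperedge contains the empty set, or when $|e| \le 1$, in which case $e \times {\cal Y}$ contains no simple loop and solvability is automatic).
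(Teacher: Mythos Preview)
Your proof is correct and follows essentially the same approach as the paper: both arguments sum the compatibility identities around a simple loop so that the $Q$-terms telescope to zero while the indicator terms yield $|{\cal I}_+(v)| - |{\cal I}_-(v)|$, forcing the balanced condition (the paper phrases this as a contradiction, you do it directly). Your iterative extension to a maximal solvable hyperedge is likewise equivalent to the paper's one-line appeal to ${\cal E}({\cal S},f)$ being the set of maximal solvable hyperedges.
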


\subsection{Coding Theorem}

\begin{theorem} \label{theorem:non-full-support}
For given $P_{XY}$ 
with ${\cal S} = \mathsf{supp}(P_{XY})$ and $f:{\cal X} \times{\cal Y} \to {\cal V}$,
let ${\cal G} = ({\cal X},{\cal E})$ be the hypergraph such that ${\cal E} = {\cal E}({\cal S},f)$ is the set of all maximal solvable 
hyperedges for $({\cal S},f)$. Then, the optimal rate for computing the type of symbol-wise function $f$ for
i.i.d. source $(\bm{X},\bm{Y})$ distributed according to $P_{XY}$ is given by
\begin{align}
R(\bm{X}|\bm{Y}|\bm{f}^{\mathsf{t}}) = H_{{\cal G}}(X|Y).
\end{align}
\end{theorem}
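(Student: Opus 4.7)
The plan is to prove both inequalities, mirroring the structure of the proof of Theorem~\ref{thm:full_support} but with the partition $\bcX$ replaced by the hypergraph $\cG$.

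For the converse $R(\bm{X}|\bm{Y}|\bm{f}^{\mathsf{t}}) \geq H_{\cG}(X|Y)$, I start from an arbitrary code $(\varphi_n,\psi_n)$ with vanishing error probability $\varepsilon_n$. The i.i.d.\ assumption furnishes the constant $c = \min_{(x,y)\in\cS} P_{Y|X}(y|x) > 0$, giving $P(X^n=\bm{x},\,Y^n=b\bm{y}^{(-i)}) \geq c\cdot P(X^n=\bm{x},\,Y^n=\bm{y})$ whenever $(x_i,b)\in\cS$. This makes the probing argument of Section~\ref{sec:idea} go through for every valid $b$, so that the decoder can reliably reconstruct the restricted list $\bigl(f_n^{\mathsf{t}}(X^n, bY^{(-i)}) : b \in \cY \text{ with } (X_i,b)\in\cS\bigr)$ with error $O(\varepsilon_n)$. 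Padding the forbidden positions arbitrarily produces some $L_i \in \cQ_n^{(i)}(\cS,X^n,Y^n)$, and Lemmas~\ref{lemma:compatible} and \ref{lemma:connection-compatible-solvable} let me define $W_i \in \cE(\cS,f)$ as any maximal solvable hyperedge containing the compatible hyperedge $e(L_i)$, so that $X_i \in W_i$ holds with probability $1-O(\varepsilon_n)$. Crucially, $W_i$ depends only on $\varphi_n(X^n)$ and $Y^{n\setminus i}=(Y_j)_{j\neq i}$, because evaluating $\psi_n(\varphi_n(X^n), bY^{(-i)})$ never uses $Y_i$.

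I will then convert this into a rate bound. The chain rule plus the i.i.d.\ property yield $\log|\cM_n| \geq I(\varphi_n(X^n) \wedge X^n|Y^n) = \sum_i I(\varphi_n(X^n) \wedge X_i | Y^n, X^{i-1}) \geq \sum_i I(W_i \wedge X_i | Y_i)$, using $H(X_i|Y^n,X^{i-1})=H(X_i|Y_i)$ and $H(X_i|\varphi_n(X^n),Y^n,X^{i-1})\leq H(X_i|W_i,Y_i)$. Introducing a uniform time-sharing index $T$ on $[1:n]$ independent of the source and setting $W=W_T$, $X=X_T$, $Y=Y_T$, I obtain $\tfrac{1}{n}\log|\cM_n|\geq I(W \wedge X|Y,T)\geq I(W \wedge X|Y)$. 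Since $W_i$ is a function of $(\varphi_n(X^n), Y^{n\setminus i})$ and the source is i.i.d., $W_i\markov X_i\markov Y_i$ for each $i$, and after time-sharing the averaged Markov chain $W\markov X\markov Y$ survives exactly as in the converse of Theorem~\ref{thm:full_support}. Together with $W\in\cE$ and $X\in W$ almost surely (after a standard continuity perturbation absorbing the $O(\varepsilon_n)$ deficit), the definition of $H_{\cG}(X|Y)$ forces $I(W \wedge X|Y)\geq H_{\cG}(X|Y)$.

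For the direct part, fix $\delta>0$ and a test channel $P_{W|X}$ with $W\markov X\markov Y$, $W\in\cE$, $X\in W$ a.s., and $I(W \wedge X|Y)\leq H_{\cG}(X|Y)+\delta$, and apply a Wyner--Ziv-style random binning: generate $2^{n(I(W \wedge X)+\delta)}$ i.i.d.\ codewords $W^n\sim P_W$, partition them into $2^{n(I(W \wedge X|Y)+2\delta)}$ bins, transmit the bin index of a $W^n$ jointly typical with $\bm{x}$, and additionally transmit, in $O(\log n)$ bits, the marginal type of $\bm{x}$ on each subsequence $\cI_W:=\{i:W_i=W\}$, $W\in\cE$. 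The decoder recovers $W^n$ as the unique codeword in the bin jointly typical with $Y^n$ (valid because $I(W \wedge X)-I(W \wedge X|Y)=I(W \wedge Y)$), then observes that on each $\cI_W$ the sub-pair lies in $W\times\cY$, which is solvable for $(\cS,f)$ by the very definition of $\cE(\cS,f)$. Lemma~\ref{lemma:marginal_condition} therefore lets the decoder compute the type of $f$-values on $\cI_W$ from the transmitted marginal type of $\bm{x}|_{\cI_W}$ and the marginal type of $\bm{y}|_{\cI_W}$ that it reads off from $Y^n$; summing over $W\in\cE$ yields $f_n^{\mathsf{t}}(\bm{x},\bm{y})$. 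The total rate is $I(W \wedge X|Y)+2\delta+O(1/n)\leq H_{\cG}(X|Y)+3\delta$, and $\delta$ is arbitrary.

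The main obstacle will be the converse, specifically ensuring (i) that $W_i$ really is a function of $(\varphi_n(X^n), Y^{n\setminus i})$ alone despite the support restriction, which hinges on the probing argument succeeding only for the valid $b$'s, and (ii) that the Markov chain $W\markov X\markov Y$ survives time-sharing so that $H_{\cG}(X|Y)$ is a legitimate lower bound. Passing from $X\in W$ with probability $1-O(\varepsilon_n)$ to $X\in W$ almost surely, which is needed to invoke the definition of hypergraph entropy, is the standard nuisance step handled by continuity of $I(W \wedge X|Y)$ in the test channel $P_{W|X}$.
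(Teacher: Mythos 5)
Your proposal is correct and follows essentially the same route as the paper: the converse probes the decoder at all $b$ with $(X_i,b)\in\cS$, builds $W_i\in\cE(\cS,f)$ from the compatible hyperedge via Lemmas \ref{lemma:compatible} and \ref{lemma:connection-compatible-solvable}, and single-letterizes using the Markov chain $W_i\markov X_i\markov Y_i$, while the direct part is the same quantize-bin scheme with conditional types resolved through Lemma \ref{lemma:marginal_condition}. The only deviations are cosmetic: you use time-sharing plus a perturbation of the test channel where the paper bounds each $I(W_i\wedge X_i|Y_i)$ by the relaxed quantity $H_{\cG}^{\gamma}(X|Y)$ and lets $\gamma\downarrow 0$ by compactness and continuity.
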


To illustrate the utility of Theorem \ref{theorem:non-full-support}, let us consider the following 
example from \cite{OrlitskyRoche01} (see also \cite{ElGamalKim}). 
\begin{example} \label{example:comparison-main-theorem}
Consider an $n$-round online game, where in each round Alice and Bob each select one card without replacement
from a virtual hat with three cards labeled $0, 1,2$. The one with larger number wins. Let $X^n$ be Alice's outcome
and $Y^n$ be Bob's outcome. This situation is described by ${\cal Y} = {\cal X}$, 
${\cal S} = \{ (x,y) : x,y \in \{0,1,2\},  x \neq y \}$, $P_{XY}(x,y) = \frac{1}{6}$ for $(x,y) \in {\cal S}$, and the function $f$ defined by \eqref{eq:function-card-problem}.
If Bob would like to know who won in each round, it suffices for Alice to send a message at rate $\frac{2}{3} h\left(\frac{1}{4}\right)$,
which is optimal \cite{OrlitskyRoche01}. 

Now, suppose that Bob does not care who won in each round; instead, he is only interested in
the total number of rounds he won. Then, Theorem \ref{theorem:non-full-support} (see also Example \ref{example:conditional-hypergraph-entropy} and Example \ref{example-solvable-card-game}) says that 
it suffices for Alice to send a message at rate $\frac{1}{2}$, which is optimal. 
\end{example}

\begin{remark}
For ${\cal X} = {\cal Y} = \{0,1,2\}$, ${\cal S} = \{ (x,y) : x,y \in \{0,1,2\}, x \neq y\}$, and the identity function $f^\mathsf{id}$,
the set of all solvable hyperedges is given by $\{ \{0,1\}, \{0,2\}, \{1,2\}\}$, which is the same as Example \ref{example-solvable-card-game}.
Thus, the optimal rate for computing the type of \eqref{eq:function-card-problem} is the same as
the optimal rate for computing the joint type.
This is in contrast to the fact that the optimal rate for computing \eqref{eq:function-card-problem} symbol-wisely is strictly smaller than
the optimal rate for computing the identity function symbol-wisely, i.e., the Slepian-Wolf rate.
\end{remark}

\subsection{Proof of Theorem}

We first prove the converse part
\begin{align}
 R(\bX|\bY|\bmf^{\mathsf{t}})\geq H_{{\cal G}}(X|Y)
\label{eq_converse:proof:theorem:non-full-support}
\end{align}
and then prove the direct part
\begin{align}
 R(\bX|\bY|\bmf^{\mathsf{t}})\leq H_{{\cal G}}(X|Y).
\label{eq_direct:proof:theorem:non-full-support}
\end{align}

\begin{IEEEproof}
[Converse part]
Fix $\varepsilon>0$ arbitrarily, and let $(\varphi_n,\psi_n)$ be a code with size $\lvert\cM_n\rvert$ satisfying
\begin{align}
 \Pr\left(\psi_n(\varphi_n(X^n),Y^n)\neq f_n^{\mathsf{t}}(X^n,Y^n)\right)&=\sum_{\bx,\by}P_{X^nY^n}(\bx,\by)\bm{1}\left[\psi_n(\varphi_n(\bx),\by)\neq f_n^{\mathsf{t}}(\bx,\by)\right]\\
&\leq\varepsilon.
\end{align}

First, by a similar manner to the proof of the converse part of Theorem \ref{thm:full_support}, we prove that, for any $i\in[1:n]$,
\begin{align}
  \Pr\left(
\psi_n(\varphi_n(X^n),bY^{(-i)})= f_n^{\mathsf{t}}(X^n,bY^{(-i)}), \forall b\in\cY\text{ s.t. }(X_i,b)\in\cS
\right)\geq 1-\frac{\lvert\cY\rvert}{q_*}\varepsilon
\label{eq2_converse:proof:theorem:non-full-support}
\end{align}
where 
\begin{align}
 q_* := \min_{(x,y)\in\cS}P_{XY}(x,y).
\end{align}
Indeed, for every 
$\hat{b}$ ($0\leq\hat{b}\leq\lvert\cY\rvert-1$)
and $i\in[1:n]$, we have
\begin{align}
\lefteqn{\Pr\left([\psi_n(\varphi_n(X^n),\pi_i^{\hat{b}}(Y^n))\neq f_n^{\mathsf{t}}(X^n,\pi_i^{\hat{b}}(Y^n))]
\wedge
[(X_i,\pi^{\hat{b}}(Y_i))\in\cS]
\right)}\nonumber\\
&=\sum_{\bx,\by}P_{X^nY^n}(\bx,\by)\bm{1}\left[
[\psi_n(\varphi_n(\bx),\pi_i^{\hat{b}}(\by))\neq f_n^{\mathsf{t}}(\bx,\pi_i^{\hat{b}}(\by))]
\wedge
[(x_i,\pi^{\hat{b}}(y_i))\in\cS]
\right]\\
&\leq \sum_{\bx,\by}\frac{1}{q_*}P_{X^nY^n}(\bx,\pi_i^{\hat{b}}(\by))\bm{1}\left[
[\psi_n(\varphi_n(\bx),\pi_i^{\hat{b}}(\by))\neq f_n^{\mathsf{t}}(\bx,\pi_i^{\hat{b}}(\by))]
\wedge
[(x_i,\pi^{\hat{b}}(y_i))\in\cS]
\right]\\
&\leq \frac{\varepsilon}{q_*}\label{eq:proof:thm:non-full-support}
\end{align}
where $\pi\colon\cY\to\cY$ is the permutation such that $y\mapsto y+1 (\bmod \lvert\cY\rvert)$.
Thus, for any $i\in[1:n]$, we have
\begin{align}
\lefteqn{
 \Pr\left(
\psi_n(\varphi_n(X^n),bY^{(-i)})\neq f_n^{\mathsf{t}}(X^n,bY^{(-i)}) \text{ for some $b\in\cY$ s.t. }(X_i,b)\in\cS
\right)}\nonumber\\
&= \Pr\left(\exists b\in\cY\text{ s.t. }\psi_n(\varphi_n(X^n),bY^{(-i)})\neq f_n^{\mathsf{t}}(X^n,bY^{(-i)})\text{ and }(X_i,b)\in\cS\right)\\
&= \sum_{\bx,\by}P_{X^nY^n}(\bx,\by)\bm{1}\left[\exists b\in\cY\text{ s.t. }\psi_n(\varphi_n(\bx),b\by^{(-i)})\neq f_n^{\mathsf{t}}(\bx,b\by^{(-i)})\text{ and }(x_i,b)\in\cS\right]\\
&= \sum_{\bx,\by}P_{X^nY^n}(\bx,\by)\bm{1}\left[\exists 0\leq \hat{b}\leq\lvert\cY\rvert-1\text{ s.t. }\psi_n(\varphi_n(\bx),\pi_i^{\hat{b}}(\by))\neq f_n^{\mathsf{t}}(\bx,\pi_i^{\hat{b}}(\by))\text{ and }(x_i,\pi^{\hat{b}}(y_i))\in\cS\right]\\
&= \Pr\left(\exists 0\leq \hat{b}\leq\lvert\cY\rvert-1\text{ s.t. }[\psi_n(\varphi_n(X^n),\pi_i^{\hat{b}}(Y^n))\neq f_n^{\mathsf{t}}(X^n,\pi_i^{\hat{b}}(Y^n))]
\wedge
[(X_i,\pi^{\hat{b}}(Y_i))\in\cS]
\right)\\
&\leq \frac{\lvert\cY\rvert}{q_*}\varepsilon.
\end{align}
This implies \eqref{eq2_converse:proof:theorem:non-full-support}.

On the other hand, for each $i\in[1:n]$, let us define a random variable $W_i$ as follows.
For each $b\in\cY$, let
\begin{align}
 Q_b = \psi_n(\varphi_n(X^n),bY^{(-i)}).
\end{align}
Then, we set $W_i=w$ for a hyperedge $w\in\cE$ satisfying $w\supseteq e(Q_0,\dots,Q_{\lvert\cY\rvert-1})$,
where existence of such a hyperedge is guaranteed by Lemma \ref{lemma:connection-compatible-solvable}.\footnote{If there are more than one such $w\in\cE$ then we pick arbitrary one.}

Note that $W_i$ satisfies the Markov chain $W_i\markov X_i\markov Y_i$, since $W_i$ is determined from $\varphi_n(X^n)$, $Y_1^{i-1}$, and $Y_{i+1}^n$.
Furthermore, from Lemma \ref{lemma:compatible} and \eqref{eq2_converse:proof:theorem:non-full-support}, it is not hard to see that
\begin{align}
 \Pr(X_i\in W_i)\geq 1-\gamma
\end{align}
where
\begin{align}
 \gamma:=\frac{\lvert\cY\rvert}{q_*}\varepsilon.
\end{align}
Now, let us introduce a new quantity
\begin{align}
H_{\cG}^{\gamma}(X|Y) &:= \min_{\substack{
W \markov X \markov Y\\
X \in W \text{ with prob.}\geq 1-\gamma}} I(W \wedge X|Y) \\
&= \min\bigg\{ I(W \wedge X|Y) : W \text{ takes values in } \cE, W \markov X \markov Y, \text{ and } \Pr( X \in W) \geq 1-\gamma \bigg\}.
\end{align}
Then, the random variable $W_i$ defined above satisfies that
\begin{align}
I(X_i\wedge W_i|Y_i)\geq H_{\cG}^{\gamma}(X|Y)\quad \forall i\in[1:n].
\end{align}
Hence, by the standard argument, we have the following chain of inequalities
\begin{align}
 \log\lvert\cM_n\rvert &\geq H(\varphi_n(X^n))\\
&\geq H(\varphi_n(X^n)|Y^n)\\
&= I(X^n\wedge\varphi_n(X^n)|Y^n)\\
&= \sum_{i=1}^n I(X_i\wedge \varphi_n(X^n)|Y^n,X^{i-1})\\
&= \sum_{i=1}^n \left[ H(X_i|Y^n,X^{i-1})-H(X_i|\varphi_n(X^n),Y^n,X^{i-1}) \right] \\
&= \sum_{i=1}^n \left[ H(X_i|Y_i)-H(X_i|\varphi_n(X^n),Y^n,X^{i-1}) \right] \\
&\geq \sum_{i=1}^n \left[ H(X_i|Y_i)-H(X_i|\varphi_n(X^n),Y^n) \right] \\
&\stackrel{\text{(a)}}{\geq} \sum_{i=1}^n \left[ H(X_i|Y_i)-H(X_i|W_i,Y_i) \right] \\
&= \sum_{i=1}^n I(X_i\wedge W_i|Y_i)\\
&\geq nH_{\cG}^{\gamma}(X|Y)
\label{eq3_converse:proof:theorem:non-full-support}
\end{align}
where the inequality (a) follows from the fact that $W_i$ is determined from $\varphi_n(X^n)$, $Y_1^{i-1}$, and $Y_{i+1}^n$.

Eq.~\eqref{eq3_converse:proof:theorem:non-full-support} implies that
\begin{align}
  R(\bX|\bY|\bmf^{\mathsf{t}})\geq H_{\cG}^{\gamma}(X|Y).
\end{align}
Since we can choose $\varepsilon$ arbitrarily small and $\gamma\to 0$ as $\varepsilon\to 0$, we have
\begin{align}
  R(\bX|\bY|\bmf^{\mathsf{t}})&\geq \sup_{\gamma>0}H_{\cG}^{\gamma}(X|Y)\\
&=\lim_{\gamma\downarrow 0}H_{\cG}^{\gamma}(X|Y)\\
&=H_{\cG}(X|Y)
\end{align}
where the last equality holds from 
the compactness of the set of conditional probabilities $P_{W|X}$
and
the continuity of the conditional mutual information $I(W\wedge X|Y)$.
Hence, we have \eqref{eq_converse:proof:theorem:non-full-support}.
\end{IEEEproof}

\begin{IEEEproof}
[Direct part]
The proof of the direct part is divided into two parts: in the first part, the encoder sends 
a quantized version of $X^n$ to the decoder; in the second part, the encoder additionally sends the (conditional) marginal type,
and the decoder computes the function value by using solvability of hyperedges.
Since the first part is the standard argument of the Wyner-Ziv coding, 
we only provide a sketch (see \cite[Sec. 11.3.1]{ElGamalKim} for the detail).

Let $P_{W|X}$ be a test channel that attain $H_{\cal G}(X|Y)$, and let $P_{WXY}$ be the joint
distribution induced by the test channel and $P_{XY}$. Fix arbitrary $\varepsilon > \varepsilon^\prime >0$, and let
\begin{align}
{\cal T}_{\varepsilon}^n(WY) := \left\{ (\bm{w},\bm{y}) : | P_{\bm{w}\bm{y}}(w,y) - P_{WY}(w,y) | \le \varepsilon P_{WY}(w,y),~\forall (w,y) \in {\cal E}\times {\cal Y}  \right\}
\end{align}
be the set of all $\varepsilon$-typical sequences; ${\cal T}_{\varepsilon^\prime}^n(WX)$ is defined similarly. 
We use the so-called quantize-bin scheme. For codebook generation, we randomly and independently generate $2^{n\tilde{R}}$ codewords
$\bm{w}(\ell)$, $\ell \in [1:2^{n \tilde{R}}]$, each according to $P_W^n$. Then, we partition the set of indices $\ell \in [1:2^{n\tilde{R}}]$ into
equal-size bins ${\cal B}(m)$, $m\in [1:2^{nR}]$. For encoding, given $\bm{x}$, the encoder finds, if exists, an index $\ell(\bm{x})$ such that
$(\bm{w}(\ell(\bm{x})), \bm{x}) \in {\cal T}_{\varepsilon^\prime}^n(WX)$, and sends the bin index $m(\bm{x})$ satisfying $\ell(\bm{x}) \in {\cal B}(m(\bm{x}))$. 
For decoding, upon receiving message $m$, the decoder finds, if exists, the unique index $\hat{\ell}(m,\bm{y}) \in {\cal B}(m)$ such that
$(\bm{w}(\hat{\ell}(m,\bm{y})), \bm{y}) \in {\cal T}_{\varepsilon}^n(WY)$. Then, if $\tilde{R} > I(W \wedge X) +\delta(\varepsilon^\prime)$ and 
$\tilde{R} - R < I(W \wedge Y) - \delta(\varepsilon)$, where $\delta(\varepsilon), \delta(\varepsilon^\prime) \to 0$ as $\varepsilon,\varepsilon^\prime \to 0$, the
following performance is guaranteed:
\begin{align}
\lim_{n\to\infty} \Pr\left( (\bm{w}(\ell(X^n)), X^n) \notin {\cal T}_{\varepsilon^\prime}^n(WX) \mbox{ or } \hat{\ell}(m(X^n), Y^n) \neq \ell(X^n)  \right) = 0.
 \label{eq:error-quantize-bin}
\end{align}

In addition to message $m(\bm{x})$, the encoder also sends the marginal type $Q_{\bar{W}} = P_{\bm{w}(\ell(\bm{x}))} \in {\cal P}_n({\cal E})$ and
the conditional type $Q_{\bar{X}|\bar{W}} = P_{\bm{x}|\bm{w}(\ell(\bm{x}))} \in {\cal P}_n({\cal X}|{\cal E})$.
Then, upon receiving $Q_{\bar{W}}$ and $Q_{\bar{X}|\bar{W}}$, the decoder declares an error if $Q_{\bar{W}} \neq P_{\bm{w}(\hat{\ell}(m,\bm{y}))}$.
Otherwise, the decoder finds, if exists, the unique conditional type $\hat{Q}_{\bar{V}|\bar{W}} \in {\cal P}_n({\cal V}|{\cal E})$
that is compatible with $Q_{\bar{X}|\bar{W}}$ and $Q_{\bar{Y}|\bar{W}} = P_{\bm{y}|\bm{w}(\hat{\ell}(m,\bm{y}))}$ in the following sense:
for every conditional joint type $Q_{\bar{X}\bar{Y}|\bar{W}}$ with $\mathsf{supp}(Q_{\bar{X}\bar{Y}|\bar{W}}(\cdot,\cdot|w)) \subseteq {\cal S}$, $\forall w \in {\cal E}$
such that the marginals are $Q_{\bar{X}|\bar{W}}$ and $Q_{\bar{Y}|\bar{W}}$ respectively, it holds that
\begin{align}
\hat{Q}_{\bar{V}|\bar{W}}(v|w) = \sum_{(x,y) : \atop f(x,y) = v} Q_{\bar{X}\bar{Y}|\bar{W}}(x,y|w)
 \label{eq:compatible-for-given-w}
\end{align}
for every $v \in {\cal V}$ and $w \in {\cal E}$.
Then, the decoder outputs 
\begin{align}
\hat{Q}_{\bar{V}}(\cdot ) = \sum_{w \in {\cal E}} Q_{\bar{W}}(w) \hat{Q}_{\bar{V}|\bar{W}}(\cdot|w) \in {\cal P}_n({\cal V})
\end{align}
as an estimate of the function value $f_n^{\mathsf{t}}(\bm{x},\bm{y})$.

Fix arbitrary $(\bm{x},\bm{y})$ satisfying $(x_i,y_i) \in {\cal S}$ for all $1 \le i \le n$. We claim that 
the estimate $\hat{Q}_{\bar{V}}$ coincide with the function value $f_n^{\mathsf{t}}(\bm{x},\bm{y})$ whenever 
$(\bm{w}(\ell(\bm{x})), \bm{x}) \in {\cal T}_{\varepsilon^\prime}^n(WX)$ and $\hat{\ell}(m(\bm{x}), \bm{y}) = \ell(\bm{x})$. 
In fact, note that $(\bm{w}(\ell(\bm{x})), \bm{x}) \in {\cal T}_{\varepsilon^\prime}^n(WX)$ implies 
$x_i \in \bm{w}(\ell(\bm{x}))_i$ for every $i \in [1:n]$, where $\bm{w}(\ell(\bm{x}))_i$ is the $i$th component of $\bm{w}(\ell(\bm{x}))$.
Thus, by applying Lemma \ref{lemma:marginal_condition} for each $w \in {\cal E}$ with
${\cal A} = w$, ${\cal B} = {\cal Y}$, and $n = |\{ i \in [1:n] : \bm{w}(\ell(\bm{x}))_i = w\}|$, existence and uniqueness of 
$\hat{Q}_{\bar{V}|\bar{W}}$ satisfying \eqref{eq:compatible-for-given-w} is guaranteed, and it satisfies $\hat{Q}_{\bar{V}|\bar{W}} = P_{f_n(\bm{x},\bm{y}) | \bm{w}(\ell(\bm{x}))}$,
which implies $\hat{Q}_{\bar{V}} = f_n^\mathsf{t}(\bm{x},\bm{y})$. 
This claim together with \eqref{eq:error-quantize-bin} imply that the error probability of computing $f_n^\mathsf{t}(X^n,Y^n)$ at the decoder vanishes asymptotically.
Since types $Q_{\bar{W}}$ and $Q_{\bar{X}|\bar{W}}$ can be sent with asymptotically zero-rate,
the total rate is bounded by $H_{\cal G}(X|Y) + \delta(\varepsilon) + \delta(\varepsilon^\prime)$. 
Since $\delta(\varepsilon), \delta(\varepsilon^\prime)$ can be made arbitrarily small, 
we have \eqref{eq_direct:proof:theorem:non-full-support}.
\end{IEEEproof}

\section{Conclusion}\label{sec:counclusion}

In this paper, we developed a new method to show converse bounds on
the distributed computing problem. By using the proposed method, we characterized 
the optimal rate of distributed computing for some classes of functions that are difficult to
be handled by previously known methods. 
The key idea of our method is, from the nature of distributed computing and the structure of the function to be computed, 
to identify information that is inevitably conveyed to the decoder. We believe that our method is useful for characterizing
the optimal rate for more general classes of functions, which will be studied in a future work. 

Another important problem to be studied is
the case where both the sources $X^n$ and $Y^n$ are encoded by two separate encoders. 
In such a problem, a difficulty is to derive a bound on the sum rate. 
Extending the method developed in this paper to such a problem is an important future research agenda. 

\appendices

\section{Technical Lemmas}\label{appendix:misc}

The following lemma says that if there exists a code with
small symbol error probability, then, by sending additional message of negligible rate, we can boost that code
so that block error probability is small. For given $0 < \beta < 1/2$, let
\begin{align} 
\nu_n(\beta) := \sum_{i=0}^{\lceil n \beta \rceil -1} (|{\cal X}|-1)^i {n \choose i} 
\le n |{\cal X}|^{n \beta} 2^{n h(\beta)}
\end{align}
be the size of Hamming ball of radius $\lceil n \beta \rceil -1$ on ${\cal X}^n$.

\begin{lemma} \label{lemma:boosting}
Suppose that $(X^n,W^n)$ on ${\cal X}^n \times {\cal X}^n$ satisfies
\begin{align}
\Pr\left( \frac{1}{n} d_H(X^n, W^n) \ge \beta \right) \le \varepsilon_n. 
\end{align}
Then, there exists an encoder $\kappa_n:{\cal X}^n \to {\cal K}_n$ with $|{\cal K}_n| \le 2^{n \delta}$ and 
a decoder $\tau_n:{\cal K}_n \times {\cal X}^n \to {\cal X}^n$ such that 
\begin{align}
\Pr\left( \tau_n(\kappa_n(X^n),W^n) \neq X^n \right) \le \varepsilon_n + \nu_n(\beta) 2^{-n\delta}.
\end{align}
\end{lemma}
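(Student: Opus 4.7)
The plan is to apply a random-binning argument in the spirit of Slepian--Wolf, restricted to the Hamming ball of radius $\lceil n\beta\rceil-1$ around the decoder's side information $W^n$. The hypothesis guarantees that $X^n$ lies in this ball with probability at least $1-\varepsilon_n$, and the ball has size exactly $\nu_n(\beta)$ regardless of its center; therefore a binning into roughly $2^{n\delta}$ cells should isolate the correct $X^n$ within the ball whenever $\nu_n(\beta)\,2^{-n\delta}$ is small.

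Concretely, first I would set $K_n:=\lceil 2^{n\delta}\rceil$ and let $\kappa_n\colon\cX^n\to[1:K_n]$ be a uniformly random binning (each $\bx\in\cX^n$ assigned independently and uniformly at random to one of the $K_n$ bins). Given a pair $(k,\bmw)$, the decoder $\tau_n(k,\bmw)$ searches the Hamming ball $B(\bmw):=\{\bx\in\cX^n:d_H(\bx,\bmw)\leq \lceil n\beta\rceil-1\}$, whose cardinality is exactly $\nu_n(\beta)$, for an element with bin index $k$; if a unique such element exists, output it, otherwise output an arbitrary fixed sequence. The block-error event $\{\tau_n(\kappa_n(X^n),W^n)\neq X^n\}$ is then contained in $E_1\cup E_2$, where $E_1:=\{\tfrac{1}{n}d_H(X^n,W^n)\geq\beta\}$ --- equivalently, $X^n\notin B(W^n)$ --- and $E_2$ is the event that $X^n\in B(W^n)$ but some $\tilde{\bx}\in B(W^n)\setminus\{X^n\}$ falls in the same bin as $X^n$.

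The hypothesis directly gives $\Pr(E_1)\leq\varepsilon_n$. For $E_2$, a union bound over $\tilde{\bx}\in B(W^n)\setminus\{X^n\}$ combined with the fact that each pair of distinct sequences collides with probability $1/K_n\leq 2^{-n\delta}$ under uniform random binning yields $\Ex_{\kappa_n}\!\left[\Pr(E_2)\right]\leq \nu_n(\beta)\,2^{-n\delta}$. The probabilistic method then produces a deterministic $\kappa_n$ (and hence deterministic $\tau_n$) for which the total error probability is at most $\varepsilon_n+\nu_n(\beta)\,2^{-n\delta}$, as required. The argument is essentially routine Slepian--Wolf--type binning restricted to a ball; the only point requiring a bit of care is bookkeeping on the radius --- the choice $\lceil n\beta\rceil-1$ is precisely what matches the strict inequality $\tfrac{1}{n}d_H(X^n,W^n)<\beta$, so that the complement of $E_1$ genuinely forces $X^n\in B(W^n)$.
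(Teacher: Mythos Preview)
Your proof is correct and essentially identical to the paper's: both use uniform random binning for $\kappa_n$, decode by searching the Hamming ball $\{\bx:d_H(\bx,\bmw)<n\beta\}=\{\bx:d_H(\bx,\bmw)\le\lceil n\beta\rceil-1\}$ for a unique bin match, bound the collision probability by a union bound over the $\nu_n(\beta)$ elements of the ball, and invoke the probabilistic method. The only cosmetic point is that $K_n=\lceil 2^{n\delta}\rceil$ may exceed $2^{n\delta}$ by one when $2^{n\delta}$ is not an integer, but the paper's own proof glosses over the same rounding issue.
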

\begin{proof}
For an encoder $\kappa_n$, we use the random binning. Given $k_n \in {\cal K}_n$ and $\bm{w} \in {\cal X}^n$, the decoder finds (if exists) a unique $\hat{\bm{x}}$ such that 
\begin{align}
\hat{\bm{x}} \in {\cal T}_\beta^n(\bm{w}) := \left\{ \bm{x} : d_H(\bm{x},\bm{w}) < n \beta \right\}  
\end{align}
and $\kappa_n(\hat{\bm{x}}) = k_n$.
Note that 
\begin{align}
|{\cal T}_\beta^n(\bm{w})| \le \nu_n(\beta),~\forall \bm{w} \in {\cal X}^n.
\end{align}
Then, by the standard argument (cf.~\cite[Lemma 7.2.1]{Han-spectrum}), the error probability averaged over the random binning is bounded as
\begin{align}
\lefteqn{ \mathbb{E}_{\kappa_n}\left[ \Pr\left( \tau_n(\kappa_n(X^n),W^n) \neq X^n \right) \right] } \\
&\le \Pr\left( \frac{1}{n} d_H(X^n, W^n) \ge \beta \right) \\ 
&~~~+ \sum_{\bm{x},\bm{w}} P_{X^n W^n}(\bm{x},\bm{w}) \sum_{\hat{\bm{x}} \in {\cal T}_\beta^n(\bm{w}) \atop \hat{\bm{x}} \neq \bm{x}}
 \Pr\left( \kappa_n(\hat{\bm{x}}) \neq \kappa_n(\bm{x}) \right) \\
&\le \varepsilon_n + \sum_{\bm{x},\bm{w}} P_{X^n W^n}(\bm{x},\bm{w}) \sum_{\hat{\bm{x}} \in {\cal T}_\beta^n(\bm{w}) \atop \hat{\bm{x}} \neq \bm{x}} \frac{1}{|{\cal K}_n|} \\
&\le \varepsilon_n + \nu_n(\beta) 2^{-n\delta}.
\end{align}
\end{proof}

The following lemma is also used in the main text; it is a slight modification of
the standard converse of the Slepian-Wolf coding (cf.~\cite[Lemma 7.2.2]{Han-spectrum}), where $X^n$
is replaced by a function value $g_n(X^n)$.

\begin{lemma} \label{lemma:converse-function-X}
For a given $(X^n,Y^n)$ and a function $g_n:{\cal X}^n \to {\cal Z}_n$, if a code $(\varphi_n,\psi_n)$ 
with size $|{\cal M}_n|$ satisfies
\begin{align}
\Pr\left( \psi_n(\varphi_n(X^n),Y^n) \neq g_n(X^n) \right) \le \varepsilon_n,
\end{align}
then it holds that 
\begin{align}
\Pr\left( \frac{1}{n} \log \frac{1}{P_{Z_n|Y^n}(Z_n|Y^n)} > \frac{1}{n} \log |{\cal M}_n| + \delta \right) \le \varepsilon_n + 2^{-n\delta},
\end{align}
where $Z_n = g_n(X^n)$.
\end{lemma}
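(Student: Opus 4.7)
The plan is to adapt the standard spectrum-converse argument for Slepian--Wolf coding (as in \cite[Lemma 7.2.2]{Han-spectrum}), with $X^n$ replaced by the function value $Z_n = g_n(X^n)$. The key observation is that, while the decoder a priori outputs values in $\cZ_n$, for any fixed side-information sequence $\by$ it can produce at most $|\cM_n|$ distinct outputs, one per possible message. Hence, conditioning on $Y^n = \by$, the decoded value lives in a set whose cardinality is at most $|\cM_n|$ — this is precisely the ``compression" we need.

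Formally, I would proceed as follows. For each $\by \in \cY^n$ define
\begin{align}
\cD(\by) := \{\psi_n(m,\by) : m \in \cM_n\} \subseteq \cZ_n,
\end{align}
so that $|\cD(\by)| \le |\cM_n|$. Whenever the decoding succeeds, i.e., $\psi_n(\varphi_n(X^n),Y^n) = Z_n$, we have $Z_n \in \cD(Y^n)$. Let $E$ denote the ``atypical" event $\{ P_{Z_n|Y^n}(Z_n|Y^n) < |\cM_n|^{-1} 2^{-n\delta}\}$, which is the event in the statement. Splitting on the success/failure of the code,
\begin{align}
\Pr(E) \le \Pr\bigl(E \cap \{Z_n \in \cD(Y^n)\}\bigr) + \varepsilon_n.
\end{align}

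The first term is then controlled directly:
\begin{align}
\Pr\bigl(E \cap \{Z_n \in \cD(Y^n)\}\bigr)
&= \sum_{\by} P_{Y^n}(\by) \sum_{z \in \cD(\by)} P_{Z_n|Y^n}(z|\by)\,\mathbf{1}\!\left[P_{Z_n|Y^n}(z|\by) < \tfrac{1}{|\cM_n|} 2^{-n\delta}\right] \\
&\le \sum_{\by} P_{Y^n}(\by) \cdot |\cD(\by)| \cdot \tfrac{1}{|\cM_n|} 2^{-n\delta} \\
&\le 2^{-n\delta},
\end{align}
using $|\cD(\by)| \le |\cM_n|$. Combining the two bounds yields the claim.

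There is no real obstacle here; the only point that requires a moment of care is the very first one, namely recognizing that the relevant ``codebook seen by the decoder" for each $\by$ is the image $\cD(\by)$ of the decoder, rather than any structure on $\cX^n$ itself. Once that is in place, the argument is the textbook spectrum-slicing bound — splitting off the failure event costs $\varepsilon_n$, and the atypical-set contribution is at most $|\cM_n| \cdot |\cM_n|^{-1} 2^{-n\delta}$ summed against $P_{Y^n}$.
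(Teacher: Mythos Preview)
Your proof is correct and follows essentially the same approach as the paper's: both split off the decoding-error event at cost $\varepsilon_n$ and then bound the remaining term by exploiting that, for each $\by$, the decoder's output ranges over at most $|\cM_n|$ values. Your formulation via the set $\cD(\by)$ is a slightly cleaner packaging of what the paper writes as a sum over $(z_n,m_n,\by)$ with the constraint $\psi_n(m_n,\by)=z_n$, but the argument is the same.
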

\begin{proof}
By the standard argument, we have
\begin{align}
\lefteqn{ \Pr\left( \frac{1}{n} \log \frac{1}{P_{Z_n|Y^n}(Z_n|Y^n)} > \frac{1}{n} \log |{\cal M}_n| + \delta \right) } \\
&\le \Pr\left( \psi_n(\varphi_n(X^n),Y^n) \neq g_n(X^n) \right) \\
&~~~ + \Pr\bigg( \frac{1}{n} \log \frac{1}{P_{Z_n|Y^n}(Z_n|Y^n)} > \frac{1}{n} \log |{\cal M}_n| + \delta, \\
&~~~ ~\psi_n(\varphi_n(X^n),Y^n) = g_n(X^n) \bigg) \\
&\le \varepsilon_n + \sum_{z_n,m_n,\bm{y}} \sum_{\bm{x} \in g_n^{-1}(z_n) \cap \varphi_n^{-1}(m_n)} P_{Y^n}(\bm{y}) P_{X^n|Y^n}(\bm{x}|\bm{y}) \\
&~~~\times  \mathbf{1}\left[ P_{Z_n|Y^n}(z_n|\bm{y}) < \frac{2^{-n\delta}}{|{\cal M}_n|},~\psi_n(m_n,\bm{y}) = z_n \right] \\
&\le \varepsilon_n + \sum_{z_n,m_n,\bm{y}} P_{Y^n}(\bm{y})   \frac{2^{-n\delta}}{|{\cal M}_n|} 
 \mathbf{1}\left[ \psi_n(m_n,\bm{y}) = z_n \right] \\
&\le \varepsilon_n + 2^{-n\delta},
\end{align}
where the third inequality follows from
\begin{align}
\sum_{\bm{x} \in g_n^{-1}(z_n) \cap \varphi_n^{-1}(m_n)} P_{X^n|Y^n}(\bm{x}|\bm{y})
&\le \sum_{\bm{x} \in g_n^{-1}(z_n)} P_{X^n|Y^n}(\bm{x}|\bm{y}) \\
&= P_{Z_n|Y^n}(z_n|\bm{y}).
\end{align}
\end{proof}

\section{Proof of Proposition \ref{proposition:finest-partition}} \label{appendix:proof-proposition:finest-partition}

\begin{IEEEproof}
First, for given partitions $\bcX_1$ and $\bcX_2$ satisfying Condition (\ref{c1:def:informative}) of Definition \ref{def:informative}, 
it is not difficult to see that their intersection also satisfies Condition (\ref{c1:def:informative}). Thus, there exists the finest 
partition satisfying Condition (\ref{c1:def:informative}), and it suffices to prove that $\bcX$ is the finest one. 

Suppose that there exists a partition $\bcX^\prime$ that is finer than $\bcX$. Then, there exist $a, \hat{a} \in {\cal X}$ such that
$[a]_{\bcX} = [\hat{a}]_{\bcX}$ and $[a]_{\bcX^\prime} \neq [\hat{a}]_{\bcX^\prime}$. Let $\bm{x} \in {\cal X}^n$ be a sequence such that 
$x_i = a$ and $x_j = \hat{a}$ for $i \neq j$. Then, for the permutation $\sigma$ that interchange only $i$ and $j$, we have
\begin{align}
f_n(\sigma(\bm{x}), \bm{y}) = f_n(\bm{x},\bm{y})
\end{align}
holds for every $\bm{y} \in {\cal Y}^n$ since $[\sigma(\bm{x})]_{\bcX} = [\bm{x}]_{\bcX}$ and the partition $\bcX$ satisfies Condition (\ref{c2:def:informative}).
On the other hand, since partition $\bcX^\prime$ satisfies Condition (\ref{c1:def:informative}), 
there exists mapping $\xi_n^{\prime(i)}$ that satisfies \eqref{eq:informative-1} for $\bcX^\prime$. Thus, 
for arbitrarily fixed $\bm{y}$, we have
\begin{align}
[a]_{\bcX^\prime} &= \xi_n^{\prime(i)}\left(\left(f_n(\bm{x},b\by^{(-i)}):b\in\cY\right)\right) \\
&= \xi_n^{\prime(i)}\left(\left(f_n(\sigma(\bm{x}),b\by^{(-i)}):b\in\cY\right)\right) \\
&= [\hat{a}]_{\bcX^\prime},
\end{align}
which is a contradiction.
\end{IEEEproof}

\section{Proof of Propositions \ref{prop:simbolwise}, \ref{prop:type}, and \ref{prop:modsum}}\label{appendix:proof_propositions}
To simplify the notation, we denote $[\cdot]_{\bcX_f}$ by $[\cdot]_f$; e.g.~$[x]_f:=[x]_{\bcX_f}$.

\begin{IEEEproof}
[Proof of Proposition \ref{prop:simbolwise}]
For any $a\in\cX$, $[a]_f$ is uniquely determined from the list $(f(a,b):b\in\cY)$. Hence it is easy to see that
Condition (\ref{c1:def:informative}) of Definition \ref{def:informative} holds.

On the other hand, if $[x_{\sigma(i)}]_f=[x_i]_f$ then
$f(x_{\sigma(i)},y)=f(x,y)$ for all $y\in\cY$. Since $[\sigma(\bx)]_f=[\bx]_f$ means $[x_{\sigma(i)}]_f=[x_i]_f$ for all
$i\in[0:1]$, Condition
(\ref{c2:def:informative}) of Definition \ref{def:informative} holds.
\end{IEEEproof}

\medskip
\begin{IEEEproof}
[Proof of Proposition \ref{prop:type}]
We first show Condition (\ref{c1:def:informative}) of Definition \ref{def:informative} is satisfied.
Fix $a\in\cX$, $(\bx,\by)\in\cX^n\times\cY^n$, and $i\in[1:n]$ arbitrarily, and let
\begin{align}
 Q_b:= f_n^{\mathsf{t}}(a\bx^{(-i)},b\by^{(-i)}),\quad b\in\cY.
\end{align}
Then, we claim that
\begin{align}
 \left(\hat{f}^{\mathsf{t}}(a,b):b\in\cY\right)
\label{eq1:proof_prop:type}
\end{align}
can be uniquely determined from $(Q_b:b\in\cY)$.
In fact, if $(Q_b:b\in\cY)$ is a constant vector, then \eqref{eq1:proof_prop:type} must be $(m,m,\dots,m)$.
Otherwise, find $v_0\in\cV$ such that
\begin{align}
 nQ_b(v_0)<nQ_0(v_0)
\end{align}
for some $0<b\leq\lvert\cY\rvert-1$. Then the first element $\hat{f}^{\mathsf{t}}(a,0)$ of \eqref{eq1:proof_prop:type} must be $v_0$.
Next, for each $0<b\leq\lvert\cY\rvert-1$, find $v$ (if exists) such that
\begin{align}
 nQ_b(v)>nQ_0(v).
\end{align}
Then the $(b+1)$th element $\hat{f}^{\mathsf{t}}(a,b)$ of \eqref{eq1:proof_prop:type} must be $v$. If such a $v$ does not exist, $\hat{f}^{\mathsf{t}}(a,b)$ must be $v_0$. In this manner, the list \eqref{eq1:proof_prop:type} is uniquely determined from $(Q_b:b\in\cY)$.
Since $[a]_{\hat{f}^{\mathsf{t}}}$ is uniquely determined from \eqref{eq1:proof_prop:type}, Condition (\ref{c1:def:informative}) holds.

Next we verify Condition (\ref{c2:def:informative}).
Let 
\begin{align}
 \cC^*:=\{x\in\cX:f(x,\cdot)\text{ is constant}\}.
\end{align}
Note that $\cC^*$ may be empty but if $\cC^*\neq\emptyset$ then $\cC^*\in\bcX_{\hat{f}^{\mathsf{t}}}$.
Now, fix $(\bx,\by)\in\cX^n\times\cY^n$ and $\sigma$ satisfying $[\sigma(\bx)]_{\hat{f}^{\mathsf{t}}}=[\bx]_{\hat{f}^{\mathsf{t}}}$ arbitrarily, and let $Q:= f_n^{\mathsf{t}}(\bx,\by)$ and $Q':= f_n^{\mathsf{t}}(\sigma(\bx),\by)$.
Further, let $\cC_i:= [x_i]_{\hat{f}^{\mathsf{t}}}=[x_{\sigma(i)}]_{\hat{f}^{\mathsf{t}}}$ for all $i\in[1:n]$.
Then, for each $v\in\cV$,
\begin{align}
 nQ(v) &=\sum_{\substack{i\in[1:n]:\\ \cC_i\neq\cC^*}}\bm{1}[f(x_i,y_i)=v]+\sum_{\substack{i\in[1:n]:\\ \cC_i=\cC^*}}\bm{1}[f(x_i,y_i)=v]\\
&=\sum_{\substack{i\in[1:n]:\\ \cC_i\neq\cC^*}}\bm{1}[f(x_{\sigma(i)},y_i)=v]+\sum_{\substack{i\in[1:n]:\\ \cC_i=\cC^*}}\bm{1}[f(x_{\sigma(i)},y_i)=v]\\
 &=nQ'(v),
\end{align}
where the second equality holds by the following reasons: the first terms coincide since $f(x_{\sigma(i)},\cdot)=f(x_i,\cdot)$ whenever $\cC_i\neq\cC^*$; 
the second terms coincide since $f(x_i,\cdot)$ is constant if $\cC_i=\cC^*$.
\end{IEEEproof}

\medskip
\begin{IEEEproof}
[Proof of Proposition \ref{prop:modsum}]
We first show Condition (\ref{c1:def:informative}) of Definition \ref{def:informative} is satisfied.
Fix $a\in\cX$, $(\bx,\by)\in\cX^n\times\cY^n$, and $i\in[1:n]$ arbitrarily, and let
\begin{align}
 v_b:= f_n^{\oplus}(a\bx^{(-i)},b\by^{(-i)}),\quad b\in\cY.
\end{align}
Then we have
\begin{align}
 \hat{f}^{\oplus}(a,b)&=v_{b+1}-v_b\quad(\bmod\ m),\quad b\in\cY
\end{align}
where $v_{\lvert\cY\rvert}=v_0$.
In other words, the list
\begin{align}
 \left(\hat{f}^{\oplus}(a,b):b\in\cY\right)
\label{eq1:proof_prop:modsum}
\end{align}
can be uniquely determined from $(v_b:b\in\cY)$.
Since $[a]_{\hat{f}^{\oplus}}$ is uniquely determined from \eqref{eq1:proof_prop:modsum}, the condition (\ref{c1:def:informative}) holds.

Next we verify Condition (\ref{c2:def:informative}).
Fix $(\bx,\by)\in\cX^n\times\cY^n$ and $\sigma$ satisfying $[\sigma(\bx)]_{\hat{f}^{\mathsf{t}}}=[\bx]_{\hat{f}^{\mathsf{t}}}$ arbitrarily, and let
$\cC_i:= [x_i]_{\hat{f}^{\oplus}}=[x_{\sigma(i)}]_{\hat{f}^{\oplus}}$ for all $i\in[1:n]$.
Then, we have
\begin{align}
 f_n(\bx,\by)&=\sum_{\cC\in\bcX_{\hat{f}^\oplus}}\sum_{\substack{i\in[1:n]:\\ \cC_i=\cC}} f(x_i,y_i)\quad(\bmod\ m)\\
&=\sum_{\cC\in\bcX_{\hat{f}^\oplus}}\sum_{\substack{i\in[1:n]:\\ \cC_i=\cC}} \left[f(x_i,0)+f(x_i,y_i)-f(x_i,0)\right]\quad(\bmod\ m)\\
&=\sum_{i=1}^n f(x_i,0)+\sum_{\cC\in\bcX_{\hat{f}^\oplus}}\sum_{\substack{i\in[1:n]:\\ \cC_i=\cC}} \left[f(x_i,y_i)-f(x_i,0)\right]\quad(\bmod\ m)\\
&\stackrel{\text{(a)}}{=}\sum_{i=1}^n f(x_{\sigma(i)},0)+\sum_{\cC\in\bcX_{\hat{f}^\oplus}}\sum_{\substack{i\in[1:n]:\\ \cC_i=\cC}} \left[f(x_{\sigma(i)},y_i)-f(x_{\sigma(i)},0)\right]\quad(\bmod\ m)\\
&=\sum_{\cC\in\bcX_{\hat{f}^\oplus}}\sum_{\substack{i\in[1:n]:\\ \cC_i=\cC}} f(x_{\sigma(i)},y_i)\quad(\bmod\ m)\\
&=f_n(\sigma(\bx),\by)
\end{align}
where the equality (a) holds by the following reasons: it is apparent that the first terms coincide since $\sigma$ is a permutation; 
the second terms coincide since, for any $i\in[1:n]$, $\sigma$ satisfies $[x_i]_{\hat{f}^{\oplus}}=[x_{\sigma(i)}]_{\hat{f}^{\oplus}}$ and thus
\begin{align}
 f(x_i,y_i)-f(x_i,0)=f(x_{\sigma(i)},y_i)-f(x_{\sigma(i)},0)\quad(\bmod\ m).
\end{align}
\end{IEEEproof}


\section{Proof of Lemmas \ref{lemma:compatible}, \ref{lemma:marginal_condition}, and \ref{lemma:connection-compatible-solvable}}\label{appendix:proof_lemmas}
\begin{IEEEproof}
[Proof of Lemma \ref{lemma:compatible}]
Since $(Q_0,\ldots,Q_{|{\cal Y}|-1}) \in {\cal Q}_n^{(i)}({\cal S},\bm{x},\bm{y})$, for any $b_1,b_2$ with $(x_i,b_1) \in {\cal S}$
and $(x_i,b_2) \in {\cal S}$, \eqref{eq:correct-sequence} implies 
\begin{align}
n Q_{b_1}(v) - n Q_{b_2}(v) = \mathbf{1}[ f(x_i,b_1) = v] - \mathbf{1}[ f(x_i,b_2) = v],~~~\forall v \in {\cal V}.
\end{align}
Thus, $x_i$ is compatible with $(Q_0,\ldots,Q_{|{\cal Y}|-1})$.
\end{IEEEproof}
\medskip

\begin{IEEEproof}
[Proof of Lemma \ref{lemma:marginal_condition}]
Suppose that $P_{\bar{X}\bar{Y}}^{(1)}, P_{\bar{X}\bar{Y}}^{(2)} \in {\cal P}_n({\cal A} \times {\cal B})$ be distinct joint types 
satisfying \eqref{eq:marginal-condition-x} and \eqref{eq:marginal-condition-y}.\footnote{If there is only one joint type satisfying
\eqref{eq:marginal-condition-x} and \eqref{eq:marginal-condition-y}, which is $P_{\bm{x}\bm{y}}$, there is nothing to be proved.}
Then, we shall show that \eqref{eq:unique-solution-condition} is satisfied.
First, we find a simple loop of ${\cal A} \times {\cal B}$ as follows. Let 
\begin{align}
\delta(x,y) := n P_{\bar{X}\bar{Y}}^{(1)}(x,y) - n P_{\bar{X}\bar{Y}}^{(2)}(x,y).
\end{align}
Then, \eqref{eq:marginal-condition-x} and \eqref{eq:marginal-condition-y} imply 
\begin{align}
\sum_{y \in {\cal B}} \delta(x,y) &= 0,~~~\forall x \in {\cal A}, \label{eq:delta-row-condition} \\
\sum_{x \in {\cal A}} \delta(x,y) &= 0,~~~\forall y \in {\cal B}. \label{eq:delta-column-condition}
\end{align}
Furthermore, we also have
\begin{align}
\sum_{x \in {\cal A} \atop y \in {\cal B}} \delta(x,y) = 0,
\end{align}
and
\begin{align}
\sum_{x \in {\cal A} \atop y \in {\cal B}} |\delta(x,y)|
 \label{eq:absolute-difference}
\end{align}
is strictly positive. We first pick any $(a_0,b_0) \in ({\cal A} \times {\cal B}) \cap {\cal S}$ such that $\delta(a_0,b_0) > 0$.
Then, we pick any $b_1 \in {\cal B}$ such that $b_1 \neq b_0$ and $\delta(a_0,b_1) < 0$, which must exist by \eqref{eq:delta-row-condition}.
Next, we pick any $a_1 \in {\cal A}$ such that $a_1 \neq a_0$ and $\delta(a_1,b_1) > 0$, which must exist by \eqref{eq:delta-column-condition}.
We continue this procedure by picking $b_i \in {\cal B}$ such that $b_i \neq b_j$ for $0 \le j < i$ and $\delta(a_{i-1},b_i) < 0$; or
by picking $a_i \in {\cal A}$ such that $a_i \neq a_j$ for $0 \le j < i$ and $\delta(a_i,b_i) > 0$.
We terminate the procedure when the only candidate is $b_0$ or $a_0$. If the procedure terminates by finding $b_0$
after picking $a_{m-1}$, then
\begin{align} 
\{ (a_0,b_0), (a_0,b_1),(a_1,b_1),(a_1,b_2),\ldots, (a_{m-2},b_{m-1}), (a_{m-1},b_{m-1}), (a_{m-1},b_0)\} 
\end{align}
is the desired simple loop; if the procedure terminates by finding $a_0$
after picking $b_{m-1}$, then
\begin{align}
\{ (a_0,b_{m-1}) ,(a_0,b_1),(a_1,b_1),(a_1,b_2),\ldots, (a_{m-2},b_{m-1}) \}
\end{align}
is the desired simple loop. Suppose that the former case occurred; the case with the latter 
proceed similarly with appropriate relabeling. 

Along the simple loop we found above, we modify $P_{\bar{X}\bar{Y}}^{(1)}$ as follows:
\begin{align}
P_{\bar{X}\bar{Y}}^{(1)}(a_i,b_i) &\to P_{\bar{X}\bar{Y}}^{(1)}(a_i,b_i) - \frac{1}{n}, \label{eq:modification-plus} \\
P_{\bar{X}\bar{Y}}^{(1)}(a_i,b_{i + 1~\mathrm{mod }~m}) &\to P_{\bar{X}\bar{Y}}^{(1)}(a_i,b_{i +1~\mathrm{mod}~m}) + \frac{1}{n}, \label{eq:modification-minus}
\end{align} 
and other components remain unchanged.
Since ${\cal A} \times {\cal B}$ is solvable, the simple loop must satisfy 
\eqref{eq:balanced-condition}. Thus, 
\begin{align} 
\sum_{(x,y) \in {\cal A}\times {\cal B}: \atop f(x,y)=v} P_{\bar{X}\bar{Y}}^{(1)}(x,y)
 \label{eq:invariance}
\end{align}
remain unchanged for every $v \in {\cal V}$ by the above modification procedure, \eqref{eq:modification-plus} and \eqref{eq:modification-minus}.
On the other hand, \eqref{eq:absolute-difference} strictly decrements by the above modification procedure. 

For the modified $P_{\bar{X}\bar{Y}}^{(1)}$ and the corresponding $\delta(x,y)$, we look for a simple loop again in the same manner as above,
and modify $P_{\bar{X}\bar{Y}}^{(1)}$ along the found simple loop. We continue this process until \eqref{eq:absolute-difference} become $0$,
which implies $P_{\bar{X}\bar{Y}}^{(1)}$ coincides with $P_{\bar{X}\bar{Y}}^{(2)}$. Since \eqref{eq:invariance} remain unchanged by the above 
modification procedure, \eqref{eq:unique-solution-condition} must have been satisfied in the first place. 
\end{IEEEproof}

\begin{IEEEproof}
[Proof of Lemma \ref{lemma:connection-compatible-solvable}]
The latter statement follows from the former statement since ${\cal E}({\cal S},f)$ is the set of all maximal solvable hyperedges.
Thus, we prove the former statement. Suppose that $e(Q_0,\ldots,Q_{|{\cal Y}|-1})$ is not solvable. Then, there exists a simple
loop
\begin{align}
\{ (a_0,b_0),(a_0,b_1),\ldots,(a_{m-1},b_{m-1}), (a_{m-1},b_0) \} \subseteq (e(Q_0,\ldots,Q_{|{\cal Y}|-1}) \times {\cal Y}) \cap {\cal S}
\end{align}
that violates \eqref{eq:balanced-condition} for some $v^* \in {\cal V}$. Since $a_0,\ldots,a_{m-1}$ are compatible with $(Q_0,\ldots,Q_{|{\cal Y}|-1})$,
we have
\begin{align}
n Q_{b_0}(v^*) - n Q_{b_1}(v^*) &= \mathbf{1}[ f(a_0,b_0) = v^* ] - \mathbf{1}[ f(a_0,b_1) = v^* ], \label{eq:loop-compatible-0} \\
n Q_{b_1}(v^*) - n Q_{b_2}(v^*) &= \mathbf{1}[ f(a_1,b_1) = v^* ] - \mathbf{1}[ f(a_1,b_2) = v^* ], \label{eq:loop-compatible-1} \\
&~ \vdots \nonumber \\
nQ_{b_{m-1}}(v^*) - n Q_{b_0}(v^*) &= \mathbf{1}[ f(a_{m-1},b_{m-1}) = v^* ] - \mathbf{1}[ f(a_{m-1},b_0) = v^*]. \label{eq:loop-compatible-m-1}
\end{align}
We find that the summation of the left hand sides of \eqref{eq:loop-compatible-0}-\eqref{eq:loop-compatible-m-1} is $0$; on the other hand,
since \eqref{eq:balanced-condition} is violated for $v^*$, the summation of the right hand sides of \eqref{eq:loop-compatible-0}-\eqref{eq:loop-compatible-m-1} is not $0$,
which is a contradiction. Thus, $e(Q_0,\ldots,Q_{|{\cal Y}|-1})$ is solvable.
\end{IEEEproof}

\subsection*{Acknowledgements} 

The authors would like to thank anonymous reviewers for valuable comments,
which improved the presentation of the paper. 


\bibliographystyle{../../IEEETranS}
\bibliography{../../reference}
\end{document}